\documentclass[12pt]{article}

\usepackage[utf8]{inputenc}

\usepackage{amsmath,amssymb,amsthm,mathtools,amstext}

\usepackage{graphicx}
\usepackage{subfig}
\usepackage{subcaption}
\usepackage{tikz}
\usetikzlibrary{shapes.geometric, arrows}
\usepackage{pgfplots}
\pgfplotsset{compat=1.17}
\usepackage{pgf-pie}
\usepackage{svg}

\usepackage{xcolor}
\usepackage{titlesec}

\usepackage{ulem}
\usepackage[round,authoryear]{natbib}
\bibliographystyle{plainnat} 
\usepackage{verbatim}
\usepackage{chapterbib}
\usepackage{hyperref}
\usepackage{cancel}
\usepackage{multirow}
\usepackage{multicol}
\usepackage{eurosym}
\usepackage{academicons}
\usepackage[left]{lineno}  % Zeilennummern links anzeigen

\newtheorem{proposition}{Proposition}[section]
\newtheorem{corollary}[proposition]{Corollary}
\newtheorem{theorem}{Theorem}
\newtheorem{assumption}[proposition]{Assumption}
\newtheorem{lemma}[proposition]{Lemma}
\theoremstyle{definition}
\newtheorem{definition}[proposition]{Definition}
\newtheorem{remark}[proposition]{Remark}
\newtheorem{notation}[proposition]{Notation}

\begin{document}

\title{\LARGE Consistency and Central Limit Results for the Maximum Likelihood Estimator in the Admixture Model}

\maketitle

\begin{center}
   {Carola Sophia Heinzel} 

Department of Mathematical Stochastics,\\
Ernst-Zermelo Straße 1,
            Freiburg im Breisgau,
            79140,Germany \\
            carola.heinzel@stochastik.uni-freiburg.de
\end{center}

\begin{abstract}
In the Admixture Model, the probability of an individual having a certain number of alleles at a specific marker depends on the allele frequencies in $K$ ancestral populations and the fraction of the individual's genome originating from these ancestral populations.

This study investigates consistency and central limit results of maximum likelihood estimators (MLEs) for the ancestry and the allele frequencies in the Admixture Model, complimenting previous work by \cite{pfaff2004information, pfaffelhuber2022central}. Specifically, we prove consistency of the MLE, if we estimate the allele frequencies and the ancestries. Furthermore, we prove central limit theorems, if we estimate the ancestry of a finite number of individuals and the allele frequencies of finitely many markers, also addressing the case where the true ancestry lies on the boundary of the parameter space. 
Finally, we use the new theory to quantify the uncertainty of the MLEs for the data of \citet{10002015global}.
\end{abstract}

\textit{
Keywords: Admixture Model, Central Limit Results, Maximum Likelihood Estimator, Consistency}

\section{Introduction}
% Importance Topic
Estimating the individual's ancestry from genetic data has many applications, e.g., identifying missing persons \citep{phillips2015}, exploring human history \citep{rb2002}, or correcting for population stratification \citep{pritchard2001}. Therefore, the Admixture Model is often used, see e.g., \citep{lovell2021, badouin2017} for examples of application. More precisely, we have genetic data that are multinomially distributed with a success probability that depends on the individual's ancestries in $K$ ancestral populations and the allele frequencies in these populations. We estimate the individuals' ancestries and allele frequencies with maximum likelihood estimators (MLEs). MLEs in the Admixture Model occur in two different settings: the allele frequencies are known (supervised setting) or the allele frequencies must be estimated (unsupervised setting). Here, we introduce a third setting, the semi-supervised setting. In this framework, we assume complete knowledge of ancestry for infinitely many individuals and of allele frequencies across infinitely many genetic markers. Nevertheless, there are $N_C$ individuals with unknown ancestry and $M_C$ markers with unknown allele frequencies.

Uncertainty of MLEs can have, besides of having only a limited amount of data,  different reasons. This includes, e.g., the case where the model does not fit the data \citep{lawson2018}. To detect this, the correlation between the true genotypes and the expected genotypes given the predicted allele frequencies can be used \citep{garcia2020evaluation}. Furthermore, \citet{divers2011} considered the influence of measurement errors on the quality of the estimation. Here, we exclusively focus on the Admixture Model and assume that neither measurement errors occur nor the model does not fit the data. 

There are some studies that have considered the uncertainty in the Admixture Model under these assumptions. For example, \citet{pfaff2004information} named a central limit theorem (CLT) in the supervised setting and \citet{pfaffelhuber2022central} calculated the asymptotic distribution of the MLE for normally distributed allele frequencies, if the true parameters are in the interior of the parameter space. We complement existing methods by the CLT if the true parameter is on the boundary of the parameter space and a CLT for the semi-supervised setting, for which we need consistency of the MLEs. Especially, uniqueness of the MLEs is important. Hence, we name constraints that ensure the uniqueness of the MLEs in the unsupervised setting which is based on \cite{heinzel2025revealing}.

Therefore, we use some general theory on consistency and central limit results for MLEs. For independently and identically distributed (i.i.d.) random variables, this is textbook knowledge \citep{ferguson2017}. For independently, not identically distributed random variables, \citet{hoadley1971} developed some theory. Moreover, \cite{redner1981note} considered consistency for non-identifiable distributions of the random variables, which are assumed to be i.i.d.. Unfortunately, both results are not directly applicable to the Admixture Model, i.e we adapted the proof in \cite{hoadley1971}. If the true parameters are on the boundary of the parameter space, we use theory by \citet{andrews1999estimation}. 

First, we explain the model in more detail and name already known results that we will use later (section \ref{sec:model}). In section \ref{sec:main} we explain our main results, i.e.  consistency and central limit results for different settings. Afterwards, we apply the CLT in the Admixture Model to real data from \citet{10002015global} to quantify the uncertainty of the MLEs. The results, which are presented in section \ref{sec:application}, show that the uncertainty for the estimation is much smaller, if the true parameter is on the boundary of the parameter space. Finally, we prove our claims in section \ref{sec:proof}.

\section{Admixture Model}\label{sec:model}
In this section, we introduce the most important notation and we define the Admixture Model as e.g. described by \cite{alexander2009fast}.  

\begin{definition}[Admixture Model]
Let $M,N \in \mathbb N, i \in \{1, ..., N\}, n \in \{1, ..., M\}$. Let $\mathbb{S}^K$ for the $(K-1)$-dimensional simplex. Moreover, we denote by $q_{i, \cdot}$ the row vector $(q_{i,1}, \ldots, q_{i,K  })$ and we write $\langle v_1, v_2 \rangle$ for the scalar product of the vectors $v_1, v_2$. Let $K$ be fixed, $q^0_{i, \cdot} \in \mathbb{S}^K$, and $p^0_{k,\cdot,m}\in \mathbb{S}^{J_m}.$ Additionally, it holds $X_{i, \cdot, m} \sim \text{Multi}(2, \langle q^0_{i, \cdot} p^0_{\cdot, j, m}\rangle_{j=1,...,{J_m}})$, and the family $(X_{i, \cdot, m})_{i=1,...,N, m=1,...,M}$ is independent. Hence, the log-likelihood is, for a constant $C_x$ which only depends on $x$,
\begin{align*}
    \ell(q^0, p^0|x)  & = C_x +  \frac{1}{MNJ}\sum_{i=1}^N \sum_{m=1}^M \sum_{j=1}^{J_m} x_{i,j,m} \log( \underbrace{\langle q^0_{i, \cdot}, p^0_{\cdot, j, m} \rangle}_{=:c_{i, j,m}^0}).
  \end{align*}
The MLE is defined by
\begin{align*}
         \left(\hat Q^N, \hat P^M\right) &= \text{argmax}\left\{(q,p) \mapsto \ell(q,p|x)\right\}.
\end{align*}
\end{definition}

The interpretation is that we have data $x =(x_{i,j,m})_{i=1,...,N, j=1,...,J, m=1,...,M}$, where $x_{i,j,m}$ stands for the number of copies of allele $j$ in individual $i$ at locus $m$. The frequencies of allele $j$ at marker $m$ in population $k$ is denoted by $p_{k,j,m},$ while the ancestry of individual $i$ from population $k$ is called $q_{i,k}.$

\begin{remark}
      We assume bi-allelic markers, i.e.  $J_m = 2 \,\forall m \in \{1, \ldots, M\}$. Hence, we write $p_{\cdot, m}$ for the frequency of one of the alleles at marker $m$ and $c_{i,m}^0 := \langle q^0_{i, \cdot} p^0_{\cdot,m}\rangle$. This simplifies the proofs, but the proof ideas can adapted to the general case. Moreover, we define
\begin{align*}
    \ell(q,p|X_{i,m} = x_{i,m}) := \log(\mathbb P_{q,p}(X_{i,m} = x_{i,m})).
\end{align*}
\end{remark}

As e.g. shown in \cite{heinzel2025revealing}, the MLE is non-unique in the unsupervised setting. For an invertible, $K \times K$ matrix $S_K,$ it holds
\begin{align}
    \mathbb P\left(\textup{Bin}(2, \langle q_{i, \cdot},p_{\cdot,m}\rangle) = x_{i,m}\right) = \mathbb P\left(\textup{Bin}(2, \langle q_{i, \cdot}S_K,S^{-1}_Kp_{\cdot,m}\rangle) = x_{i,m}\right).  \label{eq:nu}
\end{align}

Conditions that ensure that $q_{i, \cdot} S_K \geq 0$ can be interpreted as the ancestry of individual $i$ and that $S_K^{-1} p_{\cdot,  m}$ can be interpreted as the allele frequencies of marker $m$ are named in Section \ref{sec:uniqueness}. 

An other reason for the non-uniqueness might be Label Switching \citep{jakobsson2007, alexander2009fast, pritchard2010, verdu2014, kopelman2015, behr2016, cabreros2019likelihood, liu2024clumppling}, which is defined as follows.

\begin{definition}[Label Switching]\label{def:ls}
Let $\left(\left(q^0\right)^{\mathrm{T}}, p^0\right)$ be fixed. Then, we say that a permutation of the rows of this matrix is \textit{label switching.}
\end{definition}
Label Switching means that we choose a permutation of the unit matrix for $S_K$. 

\section{Main Results}\label{sec:main}

In this section, the main results of this study such as consistency and CLTs for different cases under weak and usually necessary assumptions are outlined. Therefore, we always write
 $\mathbb E,$ for the expected value with respect to the true values (usually $\left(q^0, p^0\right)$). The proofs can be found in Section \ref{sec:proof}.

The following theorem states that the MLE tends towards a value that has almost the same likelihood as the true parameters $(q^0, p^0)$.

\begin{theorem}[Consistency for non-unique MLEs]\label{th:wc_us}
  Let $\Theta$ be the parameter space. We define
    \begin{align*}
d\left(\left(q^N,p^M\right), \left(q^0,p^0\right)\right) &:= \sum_{i=1}^{N}\frac{|q^N_{i, \cdot} - q^0_{i, \cdot}|}{2^i} + \sum_{m=1}^{M} \frac{|p^M_{\cdot, m} - p^0_{\cdot,m}|}{2^m}, \\
%g(c_{i,m} ,c_{i,m}^0) &:= c^0_{i,m} \ln\left(\frac{c_{i,m}}{c^0_{i, m}}\right) +  (1-c^0_{i,m})\ln\left(\frac{1-c_{i,m}}{1-c^0_{i,m}}\right), \\
\mathcal M(\gamma)&:= \left\{(q,p) \in \Theta: \lim_{M,N \to \infty} \frac{1}{MN} \sum_{m=1}^M \sum_{i=1}^N |c_{i,m} - c_{i,m}^0| \geq \gamma \right\}.
    \end{align*}
    Then, it holds for any $\epsilon, \gamma > 0$
    \begin{align*}
        \lim_{M,N \to \infty} \mathbb P\left(\min\left\{d\left(\left(\hat Q^N, \hat P^M\right), (q,p)\right): (q,p) \in \mathcal M(\gamma)\right\} \geq \epsilon \right) = 0.
    \end{align*}
\end{theorem}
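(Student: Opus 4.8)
The statement to be proved asserts that the probability that the estimator $\left(\hat Q^N,\hat P^M\right)$ is $d$-separated from $\mathcal M(\gamma)$ by at least $\epsilon$ tends to zero; equivalently, with probability tending to one the MLE lies within $d$-distance $\epsilon$ of the set $\mathcal M(\gamma)$. The plan is to combine a Wald/Hoadley-type localization of the MLE in the likelihood (equivalently, in the $c$-) geometry with the observation that, under the tail-insensitive metric $d$, the optimal set is itself $d$-approximated by $\mathcal M(\gamma)$; proximity to the former then forces the claimed proximity to the latter, and the decisive step for the stated conclusion is this metric comparison.

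For the localization I would introduce the per-observation log-likelihood increments $W_{i,m}(q,p):=\ell\left(q,p\mid X_{i,m}=x_{i,m}\right)-\ell\left(q^0,p^0\mid X_{i,m}=x_{i,m}\right)$, whose expectation under the truth is $-\mathrm{KL}\!\left(\textup{Bin}(2,c_{i,m}^0)\,\middle\|\,\textup{Bin}(2,\langle q_{i,\cdot},p_{\cdot,m}\rangle)\right)$ and is therefore at most $-h\!\left(\left|\langle q_{i,\cdot},p_{\cdot,m}\rangle-c_{i,m}^0\right|\right)$ for a strictly increasing $h$ with $h(0)=0$. The probabilistic core is a uniform law of large numbers for the triangular array $(W_{i,m})$, obtained by adapting Hoadley to the independent, non-identically distributed, growing-dimension setting: one controls the oscillation of $\frac{1}{MN}\sum_{i,m}W_{i,m}(q,p)$ over $\Theta$, using the weighting built into $d$ to secure the required tightness. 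Since $\frac{1}{MN}\sum_{i,m}W_{i,m}\!\left(\hat Q^N,\hat P^M\right)\ge 0$ by the maximizing property of the MLE, the ULLN together with this information inequality forces $\frac{1}{MN}\sum_{i,m}\left|\langle \hat Q^N_{i,\cdot},\hat P^M_{\cdot,m}\rangle-c_{i,m}^0\right|\to 0$ in probability; that is, in the $c$-geometry the MLE concentrates on the optimal set $O:=\{(q,p):\langle q_{i,\cdot},p_{\cdot,m}\rangle=c_{i,m}^0\ \text{for all }i,m\}$.

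To pass to the stated $d$-proximity I would exploit the mismatch between the two geometries: $d$ weights index $i$ (marker $m$) by $2^{-i}$ (by $2^{-m}$) and is thus governed by finitely many low indices, whereas membership in $\mathcal M(\gamma)$ is an asymptotic Cesàro average and depends only on the high-index tail. Given $\epsilon>0$, choose $n_0$ so large that the total $d$-weight carried by all indices exceeding $n_0$ (which is bounded by $2^{2-n_0}$) is below $\epsilon$. For $N,M>n_0$, construct a competitor $(q,p)\in\mathcal M(\gamma)$ that coincides with $\left(\hat Q^N,\hat P^M\right)$ on all indices $\le n_0$ and is perturbed on the indices exceeding $n_0$ so as to force $\lim_{M,N\to\infty}\frac{1}{MN}\sum_{i,m}|c_{i,m}-c_{i,m}^0|\ge\gamma$. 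The finitely many matched low-index coordinates are asymptotically negligible in this average, so the tail perturbation alone realizes the target gap, while $d\!\left(\left(\hat Q^N,\hat P^M\right),(q,p)\right)$ is confined to the perturbed tail and hence $<\epsilon$. Consequently $\min\{d\left(\left(\hat Q^N,\hat P^M\right),(q,p)\right):(q,p)\in\mathcal M(\gamma)\}<\epsilon$ on the entire sample space once $N,M>n_0$, so the probability appearing in the theorem equals $0$ for all such $N,M$ and its limit is $0$.

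The main obstacle is the uniform law of large numbers underlying the localization: the summands $W_{i,m}$ are independent but not identically distributed and their number grows with $M$ and $N$, so the classical i.i.d. Wald--Ferguson scheme does not apply and Hoadley's hypotheses must be re-verified for the admixture array --- in particular an integrable envelope for $W_{i,m}$ and an equicontinuity (or bracketing) estimate compatible with the weighting in $d$, together with the non-uniqueness recorded in \eqref{eq:nu}, which makes $O$ a nontrivial equivalence class rather than a single point. For the stated $d$-conclusion itself the only delicate point is the construction of the competitor in $\mathcal M(\gamma)$: one must realize a prescribed asymptotic average of $\left|\langle q_{i,\cdot},p_{\cdot,m}\rangle-c_{i,m}^0\right|$ under the simplex and bilinearity constraints, which reduces to checking that the attainable average discrepancies fill an interval containing $\gamma$ (so that $\mathcal M(\gamma)\neq\varnothing$) and then invoking a Cesàro-averaging argument along the perturbed tail.
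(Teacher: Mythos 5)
Your decisive step --- the tail-splice construction in your third paragraph --- is logically sound against the statement exactly as printed, but it is not a proof of the theorem the paper intends, and it is very far from the paper's own argument. Notice first that your construction never uses that $\left(\hat Q^N, \hat P^M\right)$ maximizes the likelihood: splicing the head (indices $\leq n_0$) of \emph{any} point of $\Theta$ with a suitably perturbed tail shows that \emph{every} point of $\Theta$ is within $d$-distance $\epsilon$ of $\mathcal M(\gamma)$, so your argument proves the displayed probability is $0$ for an arbitrary estimator. That is a signal that you have exposed a degeneracy of the formulation rather than proved a consistency result. Indeed, the ``$\geq \gamma$'' in the definition of $\mathcal M(\gamma)$ is evidently a sign error for ``$\leq \gamma$'': the paper's remark that $\mathcal M$ contains the label-switched and $S_K$-transformed copies of $\left(q^0,p^0\right)$ (which satisfy $c_{i,m}=c^0_{i,m}$, i.e.\ discrepancy $0$, and so do \emph{not} satisfy the printed ``$\geq\gamma$''), and condition C4(i) in the proof, which requires every $(q,p)$ at $d$-distance at least $\epsilon$ from $\mathcal M$ to have average Kullback--Leibler divergence bounded away from zero, both make sense only for the small-discrepancy set. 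Your competitor, engineered to have \emph{large} discrepancy, targets the typo, not the intended near-equivalence class. (To your credit, the two-scale mismatch you exploit --- $d$ is tail-blind while membership in $\mathcal M$ is head-blind --- also trivializes the corrected statement: splicing any head with the \emph{true} tail produces a point of the corrected $\mathcal M$ within $d$-distance $\epsilon$, so $\Theta_d(\epsilon)=\emptyset$ and C4(i) as used in the paper is vacuous. That is a genuine defect of the theorem's formulation worth reporting, but it is a criticism of the statement, not a reconstruction of its proof.)

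Measured against the intended theorem and the paper's actual proof, the gap is that your probabilistic core (second paragraph) is precisely where all the work lies, and you leave it both unexecuted and, in your architecture, unused. The paper reduces the claim via Lemma \ref{lemma:RM} to showing that $\sup\left\{\sum_{i,m} R_{i,m}((q,p)) : (q,p) \in \Theta_d(\epsilon)\right\}$ eventually falls below any constant, and then verifies Hoadley-type conditions concretely: C3(i) through an explicit second-moment envelope, bounding $R^{(0)}_{i,m}((q,p),\rho)^2$ by $\left(X_{i,m}\ln c^0_{i,m} + (2-X_{i,m})\ln\left(1-c^0_{i,m}\right)\right)^2$; C4(i) through the binomial Kullback--Leibler computation showing strict negativity off $c_{i,m}=c^0_{i,m}$; then compactness of $\Theta_d(\epsilon)$ in the metric $d$ via Tychonoff (Lemma \ref{lemma:compact}), a finite cover of balls from Lemma \ref{lemma:uniformly}, Theorem A4 of \cite{hoadley1971} on each ball, and the final max-over-cover bound. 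Your sketch names none of these estimates: the uniform law of large numbers over the infinite-dimensional $\Theta$ is exactly the ``equicontinuity/bracketing compatible with $d$'' you defer, and your inference that the maximizing property plus an information inequality forces $\frac{1}{MN}\sum_{i,m}\left|\langle \hat Q^N_{i,\cdot},\hat P^M_{\cdot,m}\rangle - c^0_{i,m}\right| \to 0$ is asserted rather than derived (Pinsker controls squared discrepancies, and passing to first powers is easy, but the uniformity in the growing parameter is the obstacle the paper's compactness machinery exists to handle). Finally, the caveats you flagged are substantive even for the literal statement: $\mathcal M(\gamma)$ can be empty (if, say, $c^0_{i,m}\equiv \tfrac12$, no tail perturbation achieves average discrepancy above $\tfrac12$), and the defining $\lim$ need not exist for your perturbed tail without regularity assumptions on $\left(c^0_{i,m}\right)$ that the statement does not supply, so even the trivialization is conditional.
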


In the sequel, we write $\mathcal M := \mathcal M(\gamma)$ for every $\gamma > 0.$
 We aim to define $\mathcal M$ as small as possible, i.e. the set $ \{(q,p) \in \Theta: \ell(q,p|X) = \ell\left(q^0, p^0|X\right)\}$ would be ideal.
Note that the set $\mathcal M$ never contains only one element. This is caused by the normalization of the log-likelihood function, the non-uniqueness of the MLE as highlighted in \eqref{eq:nu} and by label switching. 

We are not only interested in consistency of the MLEs, but also in CLTs for the MLEs, if either the parameter space is open or the true parameter is on the boundary of the parameter space. In this case, we estimate the allele frequencies of $M_C$ markers and the ancestries of $N_C$ individuals, the MLEs for infinitely many markers and individuals is not unique, i.e. we cannot hope for asymptotic normality in this case. Therefore, we first need some notation.

\begin{notation}
We define
 $$\Theta_o := \left\{(q,p) \in \left((0,1)^{N_C \times (K-1)}, (0,1)^{K \times M_C}\right): \sum_{k=1}^{K-1} q_{i,k} < 1 \, \forall i \in \{1, ..., N_C\}\right\}$$
and $(q^0, p^0) \in \Theta_o$ and $\tilde q^0, \tilde p^0$ for the true ancestries of all individuals and allele frequencies of all markers. If the limits exist, we define
 \begin{align*}
   \bar \Gamma\left(q_{i, \cdot}'\right) &:= \lim_{M \to \infty} \frac{-1}{M}\sum_{m = 1}^M \mathbb E\left(\frac{\partial^2 \ell(q,p|X_{i,m})}{\partial (q_{i, 1},..., q_{i,K-1})^2}\Bigg|_{q_{i, \cdot} = q_{i, \cdot}'}\right),\\
       \bar{\Gamma}_{(q,q), i}(q', p') &:= \lim_{M \to \infty} \frac{-1}{M}\sum_{m=1}^M \mathbb E\left(\frac{\partial^2 \ell(q,p|X_{i,m})}{\partial (q_{i, 1},..., q_{i,K-1})^2}\bigg|_{(q,p) = (q', p')}\right) ,\\
 \bar{\Gamma}_{(p,p),m}(q', p')&:= \lim_{N \to \infty} \frac{-1}{N} \sum_{i=1}^N \mathbb E\left(\frac{\partial^2  \ell(q,p|X_{i, m})}{ \partial (p_{1,m},..., p_{K,m})^2}\bigg|_{(q,p) = (q', p')}\right) , \\
    \bar{\Gamma}_{var(q)}(q, p) &:= \begin{pmatrix}
        \bar{\Gamma}_{(q,q),1}(q, p) & 0& ... & 0 \\
         0 & \bar{\Gamma}_{(q,q),2}(q, p) & ... & 0 \\
        \vdots & \ldots  & \ddots  & \vdots\\
         0 & 0 & ... & \bar{\Gamma}_{(q,q),N_C}(q, p)  \\
    \end{pmatrix}.
\end{align*}
Analogue to $\bar{\Gamma}_{var(q)}(q, p)$, we define $\bar{\Gamma}_{var(p)}(q, p)$ by $\left(\bar{\Gamma}_{(p,p),m}(q, p)\right)_{m = 1,..., M_C}$. 
Finally, we define  $$\bar \Gamma\left(q, p\right) := \begin{pmatrix}
        \bar{\Gamma}_{var(q)}( q, p) & 0 \\
        0 & \bar{\Gamma}_{var(p)}( q,  p)
    \end{pmatrix} \in \mathbb R^{(N_C (K-1) + M_C K) \times (N_C(K-1) + M_C K)}.$$
    We define 
\begin{align*}
            \underline{p_{Km}} &:= (p_{K,m},..., p_{K,m}) \in (0,1)^{K-1}, \\
        V_1 &:= \left\{\left(\tilde p^0_{1, m}, \tilde p^0_{2,m}, \ldots, \tilde p^0_{K-1,m}\right) -  \underline{\tilde p^0_{Km}}: m= 1,2, \ldots\right\},\\
        V_2 &:= \{\tilde q^0_{i, \cdot}: i=1, 2, \ldots\},\\
        \underline{p_{s,m}} &:= ((p_{1,m}, p_{2,m},..., p_{K-1,m}) - \underline{p_{Km}})^\top ((p_{1,m}, p_{2,m},..., p_{K-1,m}) - \underline{p_{Km}}).
    \end{align*} 
\end{notation}

We first name a CLT for an open parameter space, for which we need Assumption \ref{ass:invert} to ensure the uniqueness of the MLE and the existence of the Fisher information. % to ensure that the matrix $\bar \Gamma\left(q, p\right)$ exists and is invertible.

\begin{assumption}\label{ass:invert}
   Without loss of generality, we assume that we estimate the ancestries of individuals $1, ..., N_C$ and the allele frequencies of markers $1,..., M_C.$ Additionally, we assume that $q_{i, \cdot}, p_{\cdot, m}$ are known for $i > N_C, m > M_C$. We assume that it holds $\tilde q^0_{i, k} \in [\epsilon_q, 1-\epsilon_q]$ for some $\epsilon_q > 0$ and that $\tilde p^0_{i, k} \in [\epsilon_p, 1-\epsilon_p]$ for some $\epsilon_p > 0$. 
We assume that
\begin{itemize}
    \item[(*)]  there exist infinitely many disjoint subsets \( S^p_1, S^p_2, \dots \subset V_1\), where each subset \( S^p_i \) consists of $K-1$ linearly independent vectors.
    \item[(**)] there exist infinitely many disjoint subsets \( S^q_1, S^q_2, \dots \subset V_2\), where each subset \( S^q_i \) consists of $K$ linearly independent vectors.
    \item[(***)] it holds
 \begin{align*}
    \lim_{N \to \infty} \frac{1}{N}\sum_{i=1}^N \frac{2}{\left(\tilde c_{i,m}^0(1-\tilde c_{i,m}^0)\right)^3}1_{\tilde c_{i,m}^0 \in(0,1)}  < \infty \, \forall m \in \{1,..., M_C\}, \\
       \lim_{M \to \infty} \frac{1}{M} \sum_{m=1}^M \frac{ 2}{\left(\tilde c_{i,m}^0(1-\tilde c_{i,m}^0)\right)^3}1_{\tilde c_{i,m}^0 \in(0,1)} < \infty \, \forall i \in \{1,..., N_C\}.
\end{align*} 
\end{itemize}
Additionally, we assume that the MLE is unique. 
\end{assumption}

The assumption that the MLE is unique seems to be strong at the first glance. However, we name easy to verify conditions, that ensures the uniqueness of the MLEs in Section \ref{sec:uniqueness}. We finally name a CLT in the semi-supervised setting.

\begin{theorem}[Central Limit Theorem for an open parameter space]\label{th:clt_unsupervised}
Let Assumption \ref{ass:invert} hold. Moreover, let $\bar \Gamma\left(\tilde q^0, \tilde p^0\right) \succ 0$ for all $(\tilde q_{1:N_C, \cdot}, \tilde p_{\cdot, 1:M_C}) \in \Theta_o.$
Then, it holds 
$$\left(\sqrt{M} \left(\hat Q^{N_C} - q^0\right),\sqrt{N} \left(\hat P^{M_C} -p^0\right)\right) \xRightarrow{N, M \to \infty}\mathcal N\left(0, \bar \Gamma^{-1}\left(\tilde q^0, \tilde p^0\right)\right).$$
\end{theorem}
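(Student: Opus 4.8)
The plan is to follow the classical three-step route to asymptotic normality of an $M$-estimator—consistency, a mean-value expansion of the score, and a combination of a central limit theorem for the score with a law of large numbers for the observed information—while carefully tracking the two different rates $\sqrt M$ and $\sqrt N$ that govern the ancestry and the allele-frequency coordinates. Write $\theta=(q_{1:N_C,\cdot},p_{\cdot,1:M_C})$ for the $(N_C(K-1)+M_C K)$-dimensional parameter to be estimated, let $L=\sum_{i,m}\ell(q,p|X_{i,m})$ be the unnormalised log-likelihood, and let $D$ be the diagonal scaling matrix carrying $\sqrt M$ on the $N_C(K-1)$ ancestry coordinates and $\sqrt N$ on the $M_C K$ allele-frequency coordinates. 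First I would invoke Theorem \ref{th:wc_us} together with the uniqueness part of Assumption \ref{ass:invert}: since the limiting set $\mathcal M$ collapses to the single point $\theta_0=(q^0,p^0)$ under uniqueness, the MLE is consistent, $\hat\theta\to\theta_0$ in probability. Because $\theta_0\in\Theta_o$ lies in the interior, $\hat\theta$ is eventually interior as well, so the first-order condition $\nabla L(\hat\theta)=0$ holds with probability tending to one.

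Next I would Taylor-expand the score around $\theta_0$, writing $0=\nabla L(\hat\theta)=\nabla L(\theta_0)+H(\bar\theta)(\hat\theta-\theta_0)$ for some $\bar\theta$ on the segment between $\hat\theta$ and $\theta_0$, where $H=\nabla^2 L$. Inserting the scaling gives
\begin{align*}
D(\hat\theta-\theta_0) = -\bigl[D^{-1}H(\bar\theta)D^{-1}\bigr]^{-1}\,D^{-1}\nabla L(\theta_0).
\end{align*}
The two factors are then treated separately. For the rescaled score $D^{-1}\nabla L(\theta_0)$, the ancestry block of individual $i$ is $M^{-1/2}\sum_{m=1}^M \partial_{q_{i,\cdot}}\ell(q,p|X_{i,m})$, a sum of $M$ independent mean-zero terms, and the allele-frequency block of marker $m$ is $N^{-1/2}\sum_{i=1}^N\partial_{p_{\cdot,m}}\ell(q,p|X_{i,m})$. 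I would apply the Lindeberg--Feller/Lyapunov central limit theorem for independent, non-identically distributed summands; the Lyapunov condition with exponent one is exactly what the third-order bounds in condition (***) deliver, since for a $\mathrm{Bin}(2,c)$ observation the score is proportional to $(X-2c)/(c(1-c))$, so its third absolute moment is controlled by $(c(1-c))^{-3}$, while the $\epsilon_q,\epsilon_p$-bounds keep the per-coordinate variances bounded away from $0$ and $\infty$. The Fisher-information identity then identifies the limiting covariance of each block with $\bar\Gamma_{(q,q),i}$, respectively $\bar\Gamma_{(p,p),m}$.

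A central point is that the limiting covariance is block diagonal, matching the form of $\bar\Gamma(\tilde q^0,\tilde p^0)$. Scores of distinct individuals are independent (they involve disjoint families $\{X_{i,\cdot}\}$), as are scores of distinct markers; the only coupling is between the ancestry score of $i$ and the allele-frequency score of $m$ through the single shared observation $X_{i,m}$, whose contribution to the covariance carries the rate $(MN)^{-1/2}$ and hence vanishes. The same cancellation makes the off-diagonal blocks of $D^{-1}H(\bar\theta)D^{-1}$ negligible, while each diagonal block, for instance $M^{-1}\sum_m \partial^2_{q_{i,\cdot}}\ell(q,p|X_{i,m})$, converges to $-\bar\Gamma_{(q,q),i}(\theta_0)$ by a uniform law of large numbers combined with the consistency $\bar\theta\to\theta_0$ and continuity of the second derivatives. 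Consequently $D^{-1}H(\bar\theta)D^{-1}\to-\bar\Gamma(\tilde q^0,\tilde p^0)$ in probability, and since this limit is invertible by the assumption $\bar\Gamma\succ 0$, Slutsky's theorem and the continuous-mapping theorem yield
\begin{align*}
D(\hat\theta-\theta_0)\ \Longrightarrow\ \bar\Gamma^{-1}\,\mathcal N\!\bigl(0,\bar\Gamma\bigr)=\mathcal N\!\bigl(0,\bar\Gamma^{-1}(\tilde q^0,\tilde p^0)\bigr),
\end{align*}
which is the claim.

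The step I expect to be the main obstacle is the joint treatment of the two rates: establishing that the rescaled observed-information matrix $D^{-1}H(\bar\theta)D^{-1}$ converges to a block-diagonal limit requires a law of large numbers that is uniform over a neighbourhood of $\theta_0$ for triangular arrays of independent but non-identically distributed Binomials, applied simultaneously along the $m$-direction (for the ancestry blocks) and the $i$-direction (for the allele-frequency blocks). Making the off-diagonal $(MN)^{-1/2}$-terms rigorous, and verifying that the uniform law of large numbers genuinely transfers from $\theta_0$ to the random intermediate point $\bar\theta$, is where most of the technical work lies; the hypothesis $\bar\Gamma\succ 0$ and conditions (*)--(**), which guarantee that the limiting information blocks are nondegenerate and the MLE unique, are what make the final inversion legitimate.
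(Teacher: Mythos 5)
Your overall architecture---interior first-order condition, mean-value expansion of the score, Lyapunov CLT for the rescaled score, law of large numbers with a domination argument for the observed information, then Slutsky---is in substance the same route as the paper's. The paper simply delegates this machinery to Hoadley's theorem for independent, non-identically distributed observations (stated as Theorem \ref{app:CLT}) and verifies its conditions N1--N9; what you write out is essentially the proof of that theorem, and your inputs match the paper's verification: the third-moment bound from condition (***) is exactly what the paper uses for N8, your uniform control of the Hessian near $\theta_0$ corresponds to N9, the information identity to N5--N6, and positive definiteness of $\bar\Gamma$ to N7. One point where you are arguably more careful than the paper: Hoadley's theorem is a single-index result, and the paper's conditions N7--N8 are written with $\frac{1}{MN}\sum_i\sum_m$ normalizations that do not transparently produce the two different rates $\sqrt M$ and $\sqrt N$ in the conclusion; your explicit diagonal scaling matrix $D$ and the observation that all cross terms (both in the score covariance and in the Hessian) involve only the single shared observation $X_{i,m}$ and hence vanish at rate $(MN)^{-1/2}$ addresses this head-on.

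There is, however, a genuine gap in your first step. You obtain consistency by invoking Theorem \ref{th:wc_us} ``together with uniqueness,'' claiming that the limit set $\mathcal M$ collapses to the single point $\theta_0$. It does not, and the theorem cannot deliver what you need here. The set $\mathcal M$ is defined through the normalized discrepancy $\lim_{M,N\to\infty}\frac{1}{MN}\sum_{m=1}^M\sum_{i=1}^N|c_{i,m}-c^0_{i,m}|$, while in the semi-supervised setting of Theorem \ref{th:clt_unsupervised} only the finitely many coordinates $q_{1:N_C,\cdot}$ and $p_{\cdot,1:M_C}$ are free. Changing these free parameters alters $c_{i,m}$ only for pairs $(i,m)$ with $i\le N_C$ or $m\le M_C$, of which there are $O(M+N)$; divided by $MN$ this contribution vanishes, so the criterion defining $\mathcal M$ takes the same value for every choice of the free parameters and cannot distinguish any of them from $(q^0,p^0)$. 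Uniqueness of the finite-sample MLE does not repair this, since it is the $\frac{1}{MN}$ normalization, not non-identifiability, that destroys the information. Consistency of $\left(\hat Q^{N_C},\hat P^{M_C}\right)$ (Hoadley's condition N2, which the paper itself dismisses as trivial rather than proves) needs a separate argument exploiting that each block is a finite-dimensional, identifiable, essentially supervised problem: for an individual $i\le N_C$ the $M\to\infty$ markers with known frequencies together with (*) give a Wald/Hoadley-type consistency proof, as in the paper's corollary under condition (AU), and symmetrically for each marker $m\le M_C$ using (**) and $N\to\infty$. With that substitution your proof goes through; as a minor further point, the mean-value expansion of the vector-valued score with a single intermediate point $\bar\theta$ should be done row-wise or in integral form, since the mean value theorem does not hold verbatim for vector-valued maps.
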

%In the proof of Theorem \ref{th:clt_unsupervised}, we will see that Assumption \ref{ass:invert} is not only sufficient for the central limit theorem, but also necessary. 

Consideration of a closed parameter space is in particular important, as this ensures the uniqueness of the MLE (see section \ref{sec:uniqueness}). Moreover, ADMIXTURE often names estimators that are on the boundary of the parameter space. Hence, we also name a central limit theorem for this case.

\begin{theorem}[Central Limit Theorem for a Closed Parameter Space]\label{th:uv_closed}
 Let (*) hold and let $N = 1$. We write $q^0_{k}, c^0_m, X_m$ instead of $q^0_{1, k}, c^0_{1,m}, X_{1,m}$. We assume without loss of generality $q_K^0 > 0.$
There exists an $\epsilon > 0$ such that $c^0_m \in [0, \epsilon] \cup [1-\epsilon, 1]$ for maximal finitely many $m \in \{1, ..., M\}$ and let $\mathcal K_{min}:= \{k \in \{1,...,K-1\}: q_k^0 = 0\}, \mathcal K_{max}:= \{k \in \{1,...,K-1\}: q_k^0 = 1\}, Z \sim \mathcal N\left(0, \bar \Gamma^{-1}(q^0) \right), \Gamma\left(q^0\right) \succ 0$. We define
\begin{align*}
     \Lambda &:= \left\{v \in \mathbb R^{K-1}: v_j \leq 0 (j \in \mathcal K_{max}), v_i \geq 0 (i \in \mathcal K_{min})\right\}, \\
    \hat \lambda &:= \textup{argmin}\left\{\lambda \mapsto (\lambda - Z)^{\top} \bar \Gamma\left(q^0\right)(\lambda - Z) : \lambda \in \Lambda \right\}.
\end{align*}
Then, it holds
\begin{align*}
  \sqrt{M}\left(\hat Q^1_{\cdot}-q^0_{\cdot}\right) \xRightarrow{M \to \infty}\hat \lambda.
\end{align*}
\end{theorem}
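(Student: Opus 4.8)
The plan is to recognise this statement as an instance of boundary--constrained maximum likelihood estimation and to invoke the general theory of \citet{andrews1999estimation}. With $N=1$ and the allele frequencies identifiable through condition (*), the problem reduces to estimating the free ancestry parameter $\theta := (q_1,\dots,q_{K-1})$, where $q_K = 1-\sum_{k<K}q_k$ and the $p_{\cdot,m}$ are treated as known. Writing $\ell_M(\theta) := \sum_{m=1}^M \ell(q,p\mid X_m)$ (rescaling by a positive constant does not move the maximiser), I would first establish the local quadratic (LAN--type) expansion
\begin{equation*}
 \ell_M\!\left(q^0 + \tfrac{h}{\sqrt M}\right) - \ell_M(q^0) = \langle h, W_M\rangle - \tfrac12\, h^{\top}\bar\Gamma(q^0)\, h + r_M(h),
\end{equation*}
valid uniformly for $h$ in compact sets, where $W_M := \frac{1}{\sqrt M}\sum_{m=1}^M \nabla_\theta \ell(q^0,p\mid X_m)$ and $r_M(h)=o_P(1)$.

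The expansion rests on two ingredients. For a single bi-allelic marker $X_m\sim\mathrm{Bin}(2,c_m)$ the score in direction $q_k$ equals $\big(\frac{X_m}{c_m}-\frac{2-X_m}{1-c_m}\big)(p_{k,m}-p_{K,m})$, which has mean zero at $q^0$ and per-marker Fisher information $\frac{2}{c_m^0(1-c_m^0)}\,\underline{p_{s,m}}$; averaging over $m$ reproduces exactly $\bar\Gamma(q^0)$. The hypothesis that $c^0_m\in[0,\epsilon]\cup[1-\epsilon,1]$ for only finitely many $m$ keeps these summands bounded for all but finitely many markers, so the finitely many degenerate markers contribute $O(M^{-1/2})$ to $W_M$ and drop out, and a Lindeberg condition for the remaining bounded, independent, non--identically distributed terms holds automatically, yielding $W_M \Rightarrow W\sim\mathcal N(0,\bar\Gamma(q^0))$. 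The same uniform boundedness of second and third derivatives away from $c=0,1$ gives the Hessian convergence $-\frac1M\nabla^2_\theta\ell_M(q^0)\to\bar\Gamma(q^0)$ and the remainder bound $r_M(h)=o_P(1)$, while condition (*) forces $\bar\Gamma(q^0)\succ 0$. Consistency $\hat Q^1\to q^0$, needed as a preliminary, follows from the strict concavity of the limiting expected log--likelihood in $\theta$, again a consequence of (*).

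It then remains to identify the local approximating cone and apply the boundary theory. Because $q_K^0>0$, the constraint $\sum_{k<K}q_k\le 1$ is inactive at $q^0$, so the only active constraints are $q_k\ge 0$ for $k\in\mathcal K_{min}$ and $q_k\le 1$ for $k\in\mathcal K_{max}$; the approximating cone of the parameter space at $q^0$ is therefore precisely the convex polyhedral cone $\Lambda$, as Andrews' framework demands. Feeding the expansion and the cone into \citet{andrews1999estimation} gives
\begin{equation*}
 \sqrt M\left(\hat Q^1_\cdot - q^0_\cdot\right) \Rightarrow \textup{argmax}\left\{h\mapsto \langle h, W\rangle - \tfrac12\, h^{\top}\bar\Gamma(q^0)\, h : h\in\Lambda\right\}.
\end{equation*}
Completing the square with $Z := \bar\Gamma^{-1}(q^0)W\sim\mathcal N(0,\bar\Gamma^{-1}(q^0))$ turns the objective into $-\tfrac12 (h-Z)^{\top}\bar\Gamma(q^0)(h-Z)$ up to an additive constant, so the maximiser equals $\textup{argmin}\{\lambda\mapsto(\lambda-Z)^{\top}\bar\Gamma(q^0)(\lambda-Z):\lambda\in\Lambda\}=\hat\lambda$, which is the asserted limit.

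I expect the main obstacle to be the uniform quadratic approximation over the non--identically distributed marker array together with the score CLT, in particular the bookkeeping needed to show that the finitely many markers with $c^0_m$ near the boundary are asymptotically negligible and that the remaining terms satisfy Lindeberg; verifying that these translate into the precise quadratic--approximation and cone--approximation hypotheses of \citet{andrews1999estimation} is the delicate part. Once the expansion and the cone $\Lambda$ are in place, the identification of $\hat\lambda$ as the $\bar\Gamma(q^0)$--projection of $Z$ onto $\Lambda$ is routine algebra.
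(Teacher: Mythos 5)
Your proposal is correct and takes essentially the same route as the paper: both proofs reduce the problem to verifying the hypotheses of \citet{andrews1999estimation} — a local quadratic expansion with score CLT and Hessian law of large numbers (the paper's conditions A1--A2), consistency and the $\sqrt{M}$-rate (A3), and identification of the local approximating cone $\Lambda$ with its convexity (A4--A5) — before reading off the limit as the $\bar\Gamma(q^0)$-projection of $Z$ onto $\Lambda$. The one slight overstatement is your claim that (*) by itself forces $\bar\Gamma(q^0)\succ 0$; the paper explicitly notes that positive definiteness of the limiting information cannot be deduced from (*) alone and therefore assumes it as a hypothesis of the theorem, which is harmless here since it is part of the statement you are proving.
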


It is possible to calculate $\hat \lambda$ precisely. For $K = 2, 3$ this calculation can be found on \href{https://github.com/CarolaHeinzel/Limit_Results_Admixture_Model}{GitHub}.
\begin{remark}
The assumption $q_K^0 > 0$ has the following reason: Let $q^0 = (q_1, 1-q_1, 0).$ Then, the set $\Lambda$ does not take into account that $\hat Q^{1}_{K} -q^0_K \geq 0$ has to hold. However, the assumption does not restrict our model.
\end{remark}

\section{Application to Data}\label{sec:application}
We apply the central limit results to data from  \citet{10002015global} to infer the uncertainty of the MLEs. To calculate the MLEs, we ran ADMIXTURE \citep{alexander2009fast} for all 2504 individuals and $K = 3$. The markers from the Kidd AIM set \citep{kidd2014} that consists of $M=55$ bi-allelic markers have been used. The marginal densities for one individual are shown in Figure \ref{fig:1}. 

\begin{figure}[ht]
    \centering
    \begin{minipage}{0.48\textwidth}
        \centering
        \includegraphics[width=\linewidth]{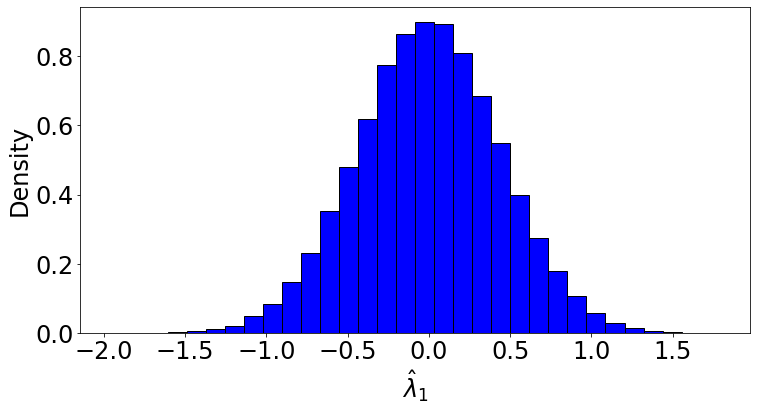}
    \end{minipage}
    \hfill
    \begin{minipage}{0.48\textwidth}
        \centering
        \includegraphics[width=\linewidth]{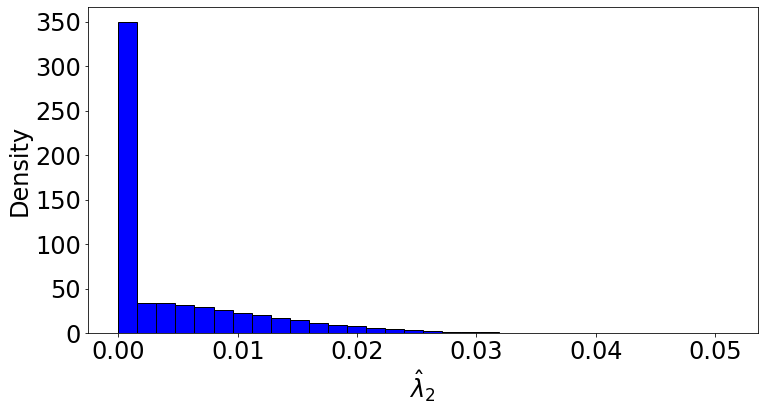}
    \end{minipage}
  \caption{\small{Asymptotic Distribution of the MLE $\sqrt{M}\left(\hat Q^1 -q^0\right)$ for individual HG00096 from GBR with $M = 55$. The output of ADMIXTURE for this individual was $(0.937166, 0.000010, 0.062824)$.}}\label{fig:1}
\end{figure}

The density is normally distributed for $\hat \lambda_1,$ where the MLE was in the interior of the parameter space. The MLE for the second population is on the boundary of the parameter space, which explains the non-normality of the marginal density.

Additionally, the results for individual HG00096 with $K = 2$ are shown in Figure \ref{fig:2}. Here, we showed 
$$\hat \lambda_2 = \begin{cases}
    0, \textup{ with probability 0.5} \\
    \tilde \lambda_2:= (\mathcal N(0, \bar{\Gamma}(q^0)) \lor 0), \textup{ else.}
\end{cases}$$

\begin{figure}[ht]
    \centering
    \includegraphics[width=0.5\linewidth]{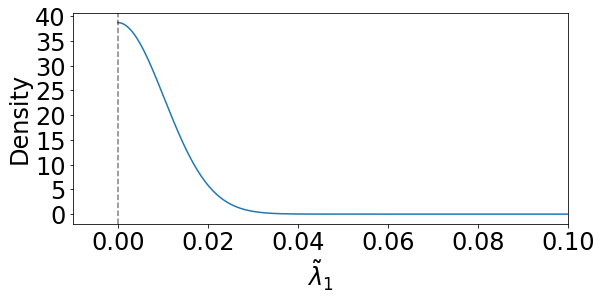}
    \caption{\small{Asymptotic Distribution of the MLE $\sqrt{M}\left(\hat Q^1 -q^0\right)$ for individual HG00096 from GBR with $M = 55$. The output of ADMIXTURE for this individual was $(0.000010, 0.999990)$.}}
    \label{fig:2}
\end{figure}

Figures \ref{fig:1} and \ref{fig:2} demonstrate that reducing the number of ancestral populations from $K=3$ to $K=2$ does not resolve the issue of the MLE lying on the boundary of the parameter space. Moreover, we see that the variance of the MLE is much smaller than if it is in the interior of the parameter space. The intuitive reason for this is that the difference between the MLE and the true parameter can only be either positive or negative and not both as in the case in which the true parameter is in the interior of the parameter space.

\section{Proofs of the Main Results}\label{sec:proof}
We proceed by proving the main results of this study. Therefore, we first name the conditions that ensure the uniqueness of the MLE. Afterwards, we prove consistency and the central limit results.

\subsection{Uniqueness of the MLE}\label{sec:uniqueness}

We consider the non-uniqueness of the MLE in the unsupervised setting for finitely many markers and individuals, i.e. we we aim to find given one MLE every matrix $S_K$ from equation \eqref{eq:nu}, where we can interpret $S^{-1}_K \hat p_{\cdot,m}$ as the allele frequencies and $\hat q_{i, \cdot}S_K$ as the individuals' ancestries. These matrices are called \textit{possible}, which is defined more precisely in Definition \ref{def:possible}.

\begin{definition}[Possible Matrix $S_K$]\label{def:possible}
 We say that the matrix $S_K$ is \textit{possible} (for the estimators $\hat q, \hat p$), if $S_K$ is invertible and it holds
\begin{align}
    (\hat qS_K)_{i, \cdot} \in \mathbb S^K, (S_K^{-1} \hat p)_{k, \cdot, m} \in \mathbb S^2. \label{constraint}
\end{align}   
Additionally, we define $\mathcal P(E_K)$ as the set that contains the $K$ dimensional unit matrix and all permutations of it.
\end{definition}

Corollary \ref{lemma:unique} states that the only possible matrices for $K = 2$ are the unit matrix and the permutation of the unit matrix if and only if specific conditions hold.

\begin{corollary}[Unique MLE for $K = 2$]\label{lemma:unique}
Let $K = 2$. It holds $(\exists m_1, m_2: \hat p_{1,m_1} = 1, \hat p_{1,m_2} = 0 ,\hat p_{2, m_1} \neq \hat p_{1,m_1}, \hat p_{1,m_2} \neq \hat p_{2,m_2}$, $ \exists i_1, i_2: \hat q_{i_1, 1} = 1, \hat q_{i_2, 1} = 0)$ if and only if ($S_2$ fulfills \eqref{constraint} $\Rightarrow S_2 \in \mathcal P(E_2)).$
\end{corollary}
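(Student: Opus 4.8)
The plan is to characterise directly all matrices $S_2$ satisfying \eqref{constraint} and show they reduce to $\mathcal P(E_2)$ exactly under the stated conditions. Throughout I write $S_2=\left(\begin{smallmatrix}a&b\\c&d\end{smallmatrix}\right)$ and proceed separately for the two implications. \emph{Necessity of the structure from the ancestry extremes.} First I would use the two pure individuals. Because $\hat q_{i_1,\cdot}=(1,0)$ and $\hat q_{i_2,\cdot}=(0,1)$, the ancestry half of \eqref{constraint} gives $(\hat qS_2)_{i_1,\cdot}=(a,b)\in\mathbb S^2$ and $(\hat qS_2)_{i_2,\cdot}=(c,d)\in\mathbb S^2$, so that $a+b=c+d=1$ and $a,b,c,d\in[0,1]$. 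Hence $S_2$ has unit row sums, its first column satisfies $(a,c)\in[0,1]^2$, and invertibility forces $\det S_2=a-c\neq0$. I would then record $S_2^{-1}=\tfrac1{a-c}\left(\begin{smallmatrix}1-c&a-1\\-c&a\end{smallmatrix}\right)$ for use in the allele constraints.

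\emph{Forcing the permutation structure from the allele extremes.} Next I would insert the two extreme markers, writing $\hat p_{\cdot,m_1}=(1,p)^{\top}$ and $\hat p_{\cdot,m_2}=(0,r)^{\top}$ with $p=\hat p_{2,m_1}\neq1$ and $r=\hat p_{2,m_2}\neq0$, into the allele half $(S_2^{-1}\hat p)_{\cdot,m}\in[0,1]^2$. This yields four linear inequalities in $a$ and $c$, which I would analyse by splitting on the sign of $\det S_2=a-c$. For $a>c$, the nonnegativity of $(S_2^{-1}\hat p_{\cdot,m_2})_1=\frac{(a-1)r}{a-c}$ together with $r\neq0$ and $a\le1$ forces $a=1$ (hence $b=0$), and the remaining inequalities at $m_1,m_2$ drive $c$ to its boundary value $0$, giving $S_2=E_2$; the case $a<c$ is symmetric and yields $a=0,\,c=1$, i.e.\ the transposition. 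The hypotheses $\hat p_{2,m_1}\neq\hat p_{1,m_1}$ and $\hat p_{1,m_2}\neq\hat p_{2,m_2}$ enter precisely here: they guarantee that the coefficients multiplying the relevant entries of $S_2^{-1}$ are nonzero, so that the corresponding inequalities are active rather than vacuous. Collecting both sign cases shows that every admissible $S_2$ lies in $\mathcal P(E_2)$.

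\emph{Converse and main obstacle.} For the reverse implication I would argue by contraposition: if one of the four conditions fails, I would exhibit an explicit possible $S_2\notin\mathcal P(E_2)$. Concretely, if no individual has pure ancestry the observed $\hat q_{i,1}$ stay in a compact subinterval of $(0,1)$, and if population $1$ never reaches a prescribed extreme frequency the observed $\hat p_{1,m}$ avoid an endpoint of $[0,1]$; in either situation a unit-row-sum perturbation $S_2=\left(\begin{smallmatrix}1&0\\\varepsilon&1-\varepsilon\end{smallmatrix}\right)$ with $\varepsilon>0$ small keeps all finitely many transformed ancestries and allele frequencies inside $[0,1]$ while $S_2\notin\mathcal P(E_2)$, so \eqref{constraint} admits a non-permutation solution and the MLE is non-unique. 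The main obstacle I anticipate is exactly the case analysis of the second paragraph: one must verify that the extremeness hypotheses make the right inequalities tight so that $a$ and $c$ are pushed to the boundary values in $\{0,1\}$ rather than merely into a subinterval, and, on the converse side, that $\varepsilon$ can be chosen uniformly over all observed markers and individuals so that the perturbed matrix remains feasible.
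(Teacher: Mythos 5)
Your first two steps are sound: the pure individuals do force $S_2$ to be row-stochastic with entries in $[0,1]$, and (in your case $a>c$) the marker $m_2$ does force $a=1$ via nonnegativity of $(S_2^{-1}\hat p_{\cdot,m_2})_1$. The gap is exactly the step you flagged as the main obstacle, and it cannot be filled: with $a=1$, the remaining inequalities at $m_1,m_2$ do \emph{not} drive $c$ to $0$. Writing $p:=\hat p_{2,m_1}\neq 1$ and $r:=\hat p_{2,m_2}\neq 0$, they reduce to $(p-c)/(1-c)\in[0,1]$ and $r/(1-c)\in[0,1]$, i.e.\ to $c\le\min(p,1-r)$ only. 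Hence, whenever $p>0$ and $r<1$, every matrix
\[
S_2=\begin{pmatrix}1&0\\ c&1-c\end{pmatrix},\qquad 0<c\le\min(p,1-r),
\]
fulfills \eqref{constraint} at $m_1$, at $m_2$, and at every individual (any ancestry $(q,1-q)$ is mapped to $(q+(1-q)c,\,(1-q)(1-c))\in\mathbb S^2$), yet $S_2\notin\mathcal P(E_2)$. The structural reason: the two hypothesized markers are both extremes of \emph{population $1$} ($\hat p_{1,m_1}=1$ and $\hat p_{1,m_2}=0$), and both translate into the same scalar condition on the first row of $S_2$ (your $a=1$); nothing in the hypothesis constrains the second row. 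To pin down $c$ one needs a marker at which \emph{population $2$} is extreme, i.e.\ $\hat p_{2,m}=1>\hat p_{1,m}$ or $\hat p_{2,m}=0<\hat p_{1,m}$. That is what the paper's own proof actually uses: it takes $m_2$ with $\hat p_{2,m_2}=1$, $\hat p_{1,m_2}<1$ (note this deviates from the literal wording of the corollary), obtaining the inequality that controls the second row (the paper's $b\le 0$, which is your $c\le 0$), and then the pure individuals give $c\ge 0$.

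The converse half of your proposal inherits the same asymmetry. Your perturbation $S_2=\begin{pmatrix}1&0\\ \varepsilon&1-\varepsilon\end{pmatrix}$ is infeasible precisely when some marker has $\hat p_{2,m}=0<\hat p_{1,m}$ or $\hat p_{2,m}=1>\hat p_{1,m}$; such population-$2$ extremes are not excluded by the negation of the stated hypothesis, so this single matrix cannot handle all failure cases. Indeed, for data consisting of the markers $(1,0.5)$ and $(0.5,1)$ together with both pure individuals, the stated hypothesis fails (no marker has $\hat p_{1,m}=0$), yet one can check that only permutations are possible — so no construction whatsoever can exhibit a non-permutation $S_2$ there. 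In other words, both directions of the equivalence break under the hypotheses exactly as written, and your proof founders at precisely those points; the paper's argument goes through because it silently replaces the condition $\hat p_{1,m_2}=0$ by the population-$2$ condition in the forward direction, and for the converse it invokes Theorem 1 of \cite{heinzel2025revealing} (via the quantities $\hat u_\ast,\hat u^\ast,\hat v_\ast,\hat v^\ast$) rather than an explicit perturbation. Your overall strategy — linear-inequality analysis of a row-stochastic $S_2$ — matches the paper's, but the decisive inequality controlling the second row is missing and cannot be derived from the corollary's stated conditions.
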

\begin{proof}
Since the values of $N, M$ can be ignored here, we write $\hat q_{i,1}, \hat p_{2,j,m}$ instead of $\hat q_{i,1}^N, \hat p_{2,j,m}^M$. Let 
$$S_2 := \begin{pmatrix}
1 - a & a \\
b & 1- b\\
\end{pmatrix}.$$
Label switching for $K = 2$ occurs if and only if $a+b < 1$ and $a+b > 1$ are both possible \citep{heinzel2025revealing}. This is the reason why we assume, without loss of generality, $a+b<1$.
We have to choose $a,b$ that so that \begin{align}
    (qS_2)_{i,\cdot} &\in \mathbb S^2, \label{c_q}\\
    (S_2^{-1})_{k,\cdot,m} &\in \mathbb S^2. \label{c_p}
\end{align}
We define
\begin{align*}
        \hat q^{\min} &\coloneqq \min\{\hat q_{i,1}:i \in \{1, \ldots, N\}\}, \hspace{1cm} 
\hat q^{\max} \coloneqq \max\{\hat q_{i,1}:i \in \{1, \ldots, N\}\}.
\end{align*}
$"\Rightarrow:"$
Since
\begin{align*}
    S_2^{-1} = \frac{1}{1-a-b}\begin{pmatrix}
        1-b & -a \\
        -b & 1-a
    \end{pmatrix},
\end{align*}
it has to hold
\begin{align*}
    \frac{p_{1,m}(1-b) - a p_{2,m}}{1-a-b} \in [0,1], 
     \frac{p_{1,m}(-b) + (1-a) p_{2,m}}{1-a-b} \in [0,1].
\end{align*}
For $\hat p_{1,m_1} = 1, \hat p_{2,m_1} < 1$ it has to hold 
\begin{align*}
0 \leq \frac{1 - b - a \hat p_{2,m_1}}{1-a-b} \leq 1%, 0 \leq \frac{\hat p_{2,j,m_1} - a \hat p_{2,j,m_1}-b}{1-a-b} \leq 1, 
\end{align*}
which consequences $a \leq 0$. For $\hat p_{1,m_2} < 1, \hat p_{2,m_2} = 1$ it has to hold 
\begin{align*}
%0 \leq \frac{\hat p_{1,j,m_2} - b \hat p_{1,j,m_2} -a}{1-a-b} \leq 1,
0 \leq \frac{1-a - b \hat p_{1,m_2}}{1-a-b} \leq 1. 
\end{align*}
This yields to $b \leq 0.$  Furthermore, since $\hat q^{\max} = 1, \hat q^{\min} = 0$, it holds $0 \leq a \leq 1, 0 \leq b \leq 1.$ Hence, $a = 0, b = 0$. Analogously, we can prove the same claim for all other possible combinations of allele frequencies that fulfill the constraints. \\
$"\Leftarrow"$ We need the following notation:
\begin{align*}
\hat u_\ast & := \min\left\{ \frac{\hat p_{2,j,m}}{\hat p_{1,j,m}}: m=1,...,M, j=1,2\right\}, \\
\hat u^\ast & := \max\left\{ \frac{\hat p_{2,j,m}}{\hat p_{1,j,m}}: m=1,...,M, j=1,2\right\}, \\
\hat v_\ast & := \min\left\{ \frac{\hat q_{i,1}}{\hat q_{i,2}}: i=1,...,N\right\},\hspace{1cm} \hat v^\ast := \max\left\{ \frac{\hat q_{i,1}}{\hat q_{i,2}}: i=1,...,N\right\}.
\end{align*} We know from Theorem 1 in \cite{heinzel2025revealing}
\begin{itemize}
    \item[a)] $\frac{u^\ast - 1}{u^\ast + v_\ast} = 1$ \hspace{1cm} b) $\frac{1 - u^\ast}{1 + u^\ast / v_\ast} = 0$
\hspace{1cm} $\frac{1 + v^\ast}{u_\ast + v^\ast} = 1$ \hspace{1cm} d)$\frac{1 + v^\ast}{1 + v^\ast / u_\ast} = 0.$
\end{itemize}

We know from $b)$ that $\hat q^{\min} = 0$, from $a)$ we infer that there exists $j, m_1$ with $p_{2,j,m_1} = 0$, from $d)$ we conclude that there exists $j, m_1$ with $p_{1,j,m_1} = 0$ and we also know from $c)$ that $\hat q^{\max} = 1.$ Hence, the claim is proven.

\end{proof}

We use Corollary \ref{lemma:unique} to name constraints that consequence that the only possible matrices $S_K$ are in $\mathcal P(E_K)$ for $K \geq 3.$

\begin{corollary}[Possible Matrices $S_K$ for arbitrary $K$]\label{cor:unique}
    Let $K \geq 2$ be arbitrary. If for every $k \in \{1, \ldots, K\}$ there is at least one individual with
   $\hat q_{i,k} = e_k$ 
and for every $k,j \in \{1, \ldots, K\}, j \neq k,$ there exists at least one marker with $\hat p_{k,m} = e_k + a_{j, m} e_{j}$  with $0 < a_{j,m} < 1$ and $a_{i,m_1} \neq a_{j,m_2} (i \neq j \lor m_1 \neq m_2)$, the only possible matrices are $S_K \in \mathcal P(E_K)$.
\end{corollary}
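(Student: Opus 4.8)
\textbf{Proof proposal for Corollary \ref{cor:unique}.}

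The plan is to reduce the general $K$ case to the already-established $K=2$ result of Corollary \ref{lemma:unique} by exploiting the structure imposed on the possible matrix $S_K$ through the constraint \eqref{constraint}. Since $S_K$ is invertible and must satisfy $(\hat q S_K)_{i,\cdot} \in \mathbb S^K$ and $(S_K^{-1}\hat p)_{k,\cdot,m} \in \mathbb S^2$, the key observation is that the hypotheses hand us, for each coordinate direction, individuals and markers whose estimators sit at the vertices (for $q$) and at the prescribed ``edge'' configurations (for $p$) of the simplex. The presence of an individual with $\hat q_{i,\cdot} = e_k$ forces the $k$-th row of $S_K$ to land in the simplex, i.e. $(S_K)_{k,\cdot} \in \mathbb S^K$; collecting this over all $k$ shows that $S_K$ is column-stochastic (each row a probability vector), which already confines $S_K$ to a bounded, highly structured set.

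First I would extract from the markers the constraints on $S_K^{-1}$. For a fixed pair $(k,j)$ with $j\neq k$, the marker with $\hat p_{k,m} = e_k + a_{j,m} e_j$ means the allele-frequency vector lives in the two-dimensional coordinate plane spanned by $e_k$ and $e_j$. Feeding such a vector through $S_K^{-1}$ and demanding the result lie in $[0,1]$ entrywise effectively restricts the action of $S_K^{-1}$ on each such plane. The idea is that restricting the whole problem to the $(k,j)$-plane reproduces exactly a $K=2$ possibility problem: the relevant $2\times 2$ submatrix of $S_K$ (rows and columns indexed by $k,j$) must itself satisfy the $2\times 2$ version of \eqref{constraint}, because we have been given both vertex individuals ($e_k$ and $e_j$) and the appropriate two markers hitting the $0$ and $1$ endpoints within that plane. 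By Corollary \ref{lemma:unique}, that $2\times 2$ submatrix must be in $\mathcal P(E_2)$, i.e. either the identity or the swap on $\{k,j\}$.

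Next I would upgrade these pairwise conclusions to a global statement. Having shown every $2\times 2$ principal-type submatrix is a permutation block, one argues that the off-diagonal entries $(S_K)_{k,j}$ are forced to vanish whenever the pairwise analysis selects the identity alternative, and the distinctness condition $a_{i,m_1}\neq a_{j,m_2}$ is what rules out the nontrivial swaps from being simultaneously consistent across all coordinate pairs. Concretely, the genericity built into the $a_{j,m}$ prevents a single $S_K$ from realizing conflicting swaps on overlapping pairs, so the only globally consistent choice is that $S_K$ permutes the coordinates, i.e. $S_K \in \mathcal P(E_K)$.

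I expect the main obstacle to be the upgrade step: making rigorous that pairwise membership in $\mathcal P(E_2)$, together with column-stochasticity and the distinctness of the $a_{j,m}$, pins down a \emph{global} permutation rather than merely a collection of locally permuted pairs. The subtlety is that permutations do not decompose independently over coordinate pairs, so one must show the pairwise swaps are mutually compatible and assemble into a single permutation; the distinctness hypothesis on the $a_{j,m}$ is precisely the lever that forces this compatibility, and verifying that it does the job is where the real work lies.
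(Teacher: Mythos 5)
Your reduction step contains a genuine gap, and in fact the intermediate claim it rests on is false. You assert that for each pair $(k,j)$ the $2\times 2$ principal submatrix of $S_K$ with rows and columns indexed by $\{k,j\}$ ``must itself satisfy the $2\times 2$ version of \eqref{constraint}'' and hence, by Corollary \ref{lemma:unique}, lies in $\mathcal P(E_2)$. Neither half of this survives scrutiny. First, the marker constraint in \eqref{constraint} is a condition on $S_K^{-1}\hat p_{\cdot,m}$, and the inverse of a principal submatrix of $S_K$ is not the corresponding principal submatrix of $S_K^{-1}$; likewise, the rows of the submatrix need only sum to at most $1$ (not exactly $1$), since row-stochasticity of $S_K$ — which is what the vertex individuals give you; incidentally this is row-, not column-stochasticity — does not pass to submatrices. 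So there is no way to ``restrict the problem to the $(k,j)$-plane'' and invoke Corollary \ref{lemma:unique} for the submatrix. Second, the conclusion you want is false for precisely the matrices you are trying to characterize: for $K=4$ and $S_K \in \mathcal P(E_K)$ the permutation exchanging $1\leftrightarrow 2$ and $3\leftrightarrow 4$ (a perfectly possible matrix under the corollary's hypotheses), the principal submatrix on $\{1,3\}$ is the zero matrix, which is not in $\mathcal P(E_2)$. A permutation need not preserve the pair $\{k,j\}$, and your pairwise framing structurally forbids exactly that; hence ``every pairwise submatrix is identity or swap'' cannot serve as an intermediate step toward ``$S_K$ is a permutation.''

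Beyond this, the ``upgrade'' step that you yourself flag as the real work is never carried out: you do not show how the distinctness of the $a_{j,m}$ forces the pairwise conclusions to assemble into one permutation, and since the pairwise conclusions are unavailable, there is nothing to glue. The paper avoids both problems by never looking at entries or submatrices of $S_K$ at all: it works with the invariant products $c_{i,m} = \hat q_{i,\cdot}\,\hat p_{\cdot,m} = (\hat q S_K)_{i,\cdot}\,(S_K^{-1}\hat p)_{\cdot,m}$. The vertex individuals together with the special markers turn these products into an explicit equation system (each special marker yields the value $1$ once, the value $a_{j,m}$ once, and $K-2$ zeros), and the nonnegativity of any admissible transformed parameters $(\hat q S_K, S_K^{-1}\hat p)$ pins down their zero pattern; induction on $K$ then finishes the argument, with Corollary \ref{lemma:unique} used only as the base case $K=2$. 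If you want to keep the ``reduce to $K=2$'' flavor of your plan, the reduction must be phrased at the level of these product equations, i.e.\ in terms of the transformed parameters, not in terms of $2\times 2$ blocks of $S_K$.
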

\begin{proof}
We prove the claim through induction. The constraints yield to the equation system
 \begin{align*}
    \hat q_1 \hat p_1 &= 1, \hat q_1 \hat p_2 = 0,\ldots, \hat q_1 \hat p_K = a_1 \\
     \hat q_2 \hat p_1 &= a_{2}, \hat q_2 \hat p_2 = 1,\ldots, \hat q_2 \hat p_K = 0 \\
         \hat q_3 \hat p_1 &= 0, \hat q_3 \hat p_2 = a_3,\ldots, \hat q_3 \hat p_K = 0 \\
     &\vdots \ldots \\
     \hat q_K \hat p_1 &= 0,\hat q_K \hat p_2  = 0, \ldots ,\hat q_K \hat p_K = 1.
\end{align*}
\underline{ Induction Beginning:} $K = 2:$\\
This can be inferred from Corollary \ref{lemma:unique}.

\noindent
\underline{Induction Step:} $K + 1$ Populations \\
We know, because of the at least $K - 2$ zeros per $\hat q_i$, that maximal two entries of the allele frequencies can be non-zero. However, they cannot be equal, because of the induction assumption. Hence, one value has to equal to one and the other one smaller than one. This implies the claim.
\end{proof}

Similarly to \cite{pfaffelhuber2022central}, we can name a constraints that holds if and only if the MLE in the supervised setting is unique.

\begin{lemma}[Uniqueness in the supervised setting]
    Constraint (*) in Assumption \ref{ass:invert} holds if and only if, the MLE $\hat Q^N$ is unique in the supervised setting.  
\end{lemma}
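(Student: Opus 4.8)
The plan is to exploit that in the supervised setting the allele frequencies $p$ are fixed, so the log-likelihood splits into one term per individual and it suffices to treat a single $i$; I write $\bar q_i := (q_{i,1}, \ldots, q_{i,K-1})$ and use the simplex constraint $q_{i,K} = 1 - \sum_{k=1}^{K-1} q_{i,k}$. Substituting this into $c_{i,m} = \langle q_{i, \cdot}, p_{\cdot,m}\rangle$ gives $c_{i,m} = p_{K,m} + \langle \bar q_i, v_m\rangle$, where $v_m := (p_{1,m},\ldots,p_{K-1,m}) - \underline{p_{Km}}$ is exactly the $m$-th generator of $V_1$. Thus the law of $(X_{i,m})_{m}$, and hence the likelihood, depends on $\bar q_i$ only through the affine images $(c_{i,m})_m$, whose linear part is the map $\bar q_i \mapsto (\langle \bar q_i, v_m\rangle)_m$. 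This reduces the whole statement to a property of this linear map and of the family $V_1$.

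Second, I relate uniqueness to injectivity. Since the single-marker binomial log-likelihood $x\log c + (2-x)\log(1-c)$ is strictly concave in $c \in (0,1)$ for every admissible $x$, the normalized log-likelihood is a strictly concave function of the vector $(c_{i,m})_m$ precomposed with the affine map above. Two parameters $\bar q_i \neq \bar q_i'$ can therefore yield the same likelihood only along a direction $u := \bar q_i - \bar q_i' \neq 0$ lying in the kernel of the linear part, i.e. with $\langle u, v_m\rangle = 0$ on the markers that survive in the limit. Combined with Theorem \ref{th:wc_us}, whose limit set consists of parameters matching $c^0_{i,m}$ on a co-finite family of markers, uniqueness of $\hat Q^N$ becomes equivalent to the following kernel condition: no nonzero $u$ satisfies $\langle u, v_m\rangle = 0$ for all but finitely many $m$; equivalently, no proper subspace $W \subsetneq \mathbb R^{K-1}$ contains $v_m$ for all but finitely many $m$.

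Third, I match this kernel condition to constraint (*). Using a greedy extraction argument I peel off a basis of $\mathbb R^{K-1}$ from the family $(v_m)_m$, discard the indices used, and repeat; this succeeds infinitely often precisely when the family retains full span after deletion of any finite index set, which is exactly the negation of ``some proper subspace contains cofinitely many $v_m$.'' Chaining the equivalences then yields (*) $\iff$ $\hat Q^N$ unique, handling the two implications symmetrically. The main obstacle is this last reconciliation: I must verify that the purely combinatorial phrasing of (*) — infinitely many pairwise disjoint $(K-1)$-subsets of linearly independent vectors — is genuinely equivalent to the analytic injectivity condition, and in particular that the $\tfrac{1}{M}$-normalization of the log-likelihood renders only the cofinite (asymptotic) behaviour of the markers relevant, so that finitely many markers can neither create nor destroy uniqueness. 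Checking that the greedy procedure neither stalls nor over-counts, and that ``disjoint subsets of $V_1$'' is read at the level of the indexed family $(v_m)_m$ rather than the collapsed set, is where the care is needed.
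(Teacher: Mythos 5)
First, a point of reference: the paper itself offers \emph{no} proof of this lemma --- it is stated bare, with only the remark that it is ``similar to'' \cite{pfaffelhuber2022central} --- so your attempt can only be judged on its own terms, not against a paper argument. Much of it is sound. The reparametrization $c_{i,m}=p_{K,m}+\langle \bar q_i, v_m\rangle$ is exactly right; strict concavity of $x\log c+(2-x)\log(1-c)$ in $c$ does reduce finite-sample uniqueness (up to a boundary caveat: with a nontrivial kernel the argmax can still be a singleton at a vertex of the simplex for some data, so ``non-spanning $\Rightarrow$ non-unique'' needs a positive-probability argument you don't give) to injectivity of $\bar q_i\mapsto(\langle \bar q_i,v_m\rangle)_{m\le M}$, i.e.\ to $\{v_m:m\le M\}$ spanning $\mathbb R^{K-1}$; and your greedy-extraction argument correctly shows that (*) is equivalent to ``the family $(v_m)_m$ still spans $\mathbb R^{K-1}$ after deletion of any finite index set,'' provided (*) is read at the level of the indexed family, as you note.

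The genuine gap is the middle link: ``$\hat Q^N$ unique $\iff$ no nonzero $u$ with $\langle u,v_m\rangle=0$ for all but finitely many $m$.'' You justify it by appeal to Theorem \ref{th:wc_us}, but you mischaracterize that theorem: its set $\mathcal M$ is defined through the Ces\`aro average $\frac{1}{MN}\sum_{i,m}|c_{i,m}-c^0_{i,m}|$, not through cofinite agreement, so a parameter that disagrees with $c^0$ on an infinite set of markers of density zero is indistinguishable from the truth for that theorem; it cannot deliver a cofinite condition. Worse, the link is not merely unproven --- under every natural precise reading of ``unique'' one of the two directions fails. If ``unique'' means the argmax over the first $M$ markers is a singleton for every (or for all large) $M$ and all data: take $K=2$, $v_1\neq 0$, $v_m=0$ for $m\ge2$; every finite-sample MLE is unique (the sum is strictly concave through marker $1$ alone), yet (*) fails, so uniqueness does not imply (*). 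If instead ``unique'' means the limiting normalized criterion (equivalently, the limiting Fisher information $\bar\Gamma$) identifies $q^0$: take $v_m\neq 0$ exactly when $m$ is a perfect square, with distinct values; then (*) holds, but the informative markers have density zero, the $\tfrac1M$-normalized limit is flat, and asymptotic uniqueness fails, so (*) does not imply uniqueness. (The paper itself concedes after Theorem \ref{th:clt_unsupervised} that (*) does not yield positive definiteness of $\bar\Gamma$.) So your chain of equivalences cannot be closed as written: the $\tfrac1M$-normalization does \emph{not} make ``only the cofinite behaviour relevant,'' which is precisely the point you deferred. To repair the argument one must first adopt a notion of uniqueness that is itself cofinite in nature --- e.g.\ ``the MLE remains eventually unique after discarding any finite set of markers'' --- and prove the lemma for that notion; that is the step your proposal (and, for that matter, the paper) leaves open.
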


We only considered the uniqueness of the MLE for finite $M,N.$ However, for $M,N \to \infty,$ we only see differences between the MLEs, if we change at least $\lceil \beta M \rceil$ markers and $\lceil \alpha N \rceil$ individuals with $\alpha, \beta > 0.$ Hence, the above theory only states that if the constraints in Corollary \ref{cor:unique} hold, a change in $\alpha M$ markers and in $\beta N$ individuals implies a change of the log-likelihood.

\subsection{Consistency of the MLE}

In this section, we prove the consistency of the MLEs in the unsupervised setting. Therefore, we adapt the ideas of Hoadley's proof \citep{hoadley1971} to make it applicable to the MLEs in the Admixture Model. Therefore, we need  Notation \ref{not}.

\begin{notation}\label{not}
We define 
    \begin{align*}\
X^{(B)} &:= X \vee (-B), \\
R_{i,m}(\theta) &= 
        \begin{cases}
            \ell((q,p)|X_{i,m}) - \ell((q^0, p^0)|X_{i,m}), \textup{ if } \textup{exp}(\ell(\theta|Y_{\ell})) > 0 \\
            0, \textup{ otherwise,}
        \end{cases} \\
R_{i,m}((q,p), \rho) &= \sup\{R_{i,m}(t): d(t, (q,p)) \leq \rho\}, \\
\bar r_{N, M}((q,p)) &= \frac 1 {NM} \sum_{i = 1}^N \sum_{m=1}^M \mathbb E(R_{i,m}((q,p)), \\
\Theta_{d}(\epsilon) &:= \left\{(q,p) \in \Theta: \min\{d\left(\left(q,p\right), \mathcal M\right)\} \geq \epsilon\right\}.
\end{align*}
\end{notation}

We use the following conditions to prove consistency, which are similar to the ones by \cite{hoadley1971}.

\begin{definition}[Conditions for consistency]\label{th:hoadley}
We define the conditions 
        \begin{itemize}
        \item[C2] $\exp(\ell((q,p)|X_{i,m}))$ is almost sure an upper semi continuous function of $(q,p),$ uniformly in $(i,m)$,
        \item[C3 (i)]{ There exists $\rho^* = \rho^*(\theta)> 0, \delta > 0, r > 0, \gamma > 0$ with 
 $\forall 0 \leq \rho \leq \rho^*$ it holds $\mathbb E\left(R^{(0)}_{i,m}((q,p), \rho)^{1+\delta}\right) \leq \gamma$,}
        \item[C4 (i)] For all $(q,p) \in \Theta_d(\epsilon)$, it holds $\lim_{M,N \to \infty} \bar r_{N,M}((q,p)) < 0$,
        \item[C5] $R_{i,m}((q,p), \rho), V_{i,m}(r)$ are measurable functions of $X_{i,m}.$
    \end{itemize}

\end{definition}
Our parameter space is bounded anyway, which simplifies the proof compared to the proof in \cite{hoadley1971}. Another difference is that \cite{hoadley1971} assumed that the parameter space is finite dimensional while we do not assume it. Hence, we have to choose the metric carefully.

Furthermore, we need, as \citet{hoadley1971}, the following two lemmas to prove consistency.
\begin{lemma}\label{lemma:RM}
The convergence
        \begin{align}
       \lim_{M,N \to \infty} \mathbb P\left(\sup\left\{\sum_{m = 1}^M \sum_{i = 1}^N R_{i,m}((q,p)):(q,p) \in \Theta_{d}(\epsilon)\right\} \geq c_p\right) = 0 \label{con}
    \end{align}
   for any $c_p \in \mathbb R$ implies
    \begin{align*}
       \lim_{M, N \to \infty} \mathbb P\left(\left(\hat Q^N, \hat P^M\right) \in \Theta_{d}(\epsilon)\right) = 0.
    \end{align*}
\end{lemma}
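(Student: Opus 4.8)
The plan is to prove the contrapositive-style implication by a direct argument: I want to show that if the MLE $\left(\hat Q^N, \hat P^M\right)$ lands in $\Theta_d(\epsilon)$ with non-vanishing probability, then the supremum of $\sum_{m,i} R_{i,m}((q,p))$ over $\Theta_d(\epsilon)$ must be nonnegative with non-vanishing probability, contradicting \eqref{con}. The key observation is the defining property of the MLE. By definition, $\left(\hat Q^N, \hat P^M\right)$ maximizes the log-likelihood $\ell(q,p|x)$, so in particular $\ell\left(\hat Q^N, \hat P^M \mid x\right) \geq \ell\left(q^0, p^0 \mid x\right)$. Recalling that (up to the constant $C_x$ and the normalization $1/(MNJ)$) the log-likelihood is the sum of the per-observation terms $\ell((q,p)|X_{i,m})$, the quantity $\sum_{m=1}^M \sum_{i=1}^N R_{i,m}((q,p))$ is exactly $MNJ$ times the normalized difference $\ell(q,p|x) - \ell(q^0,p^0|x)$ on the event where the likelihood is positive (and $R_{i,m}=0$ otherwise, which causes no loss since the MLE has positive likelihood). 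Hence on the event $\left\{\left(\hat Q^N,\hat P^M\right) \in \Theta_d(\epsilon)\right\}$ we have $\sum_{m,i} R_{i,m}\!\left(\hat Q^N, \hat P^M\right) \geq 0$, and since the argument $\left(\hat Q^N,\hat P^M\right)$ itself lies in $\Theta_d(\epsilon)$, the supremum over that set is also $\geq 0$.

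Concretely, I would carry out the following steps. First, fix $\epsilon > 0$ and observe the set inclusion of events
\begin{align*}
\left\{\left(\hat Q^N, \hat P^M\right) \in \Theta_d(\epsilon)\right\} \subseteq \left\{\sup\left\{\textstyle\sum_{m=1}^M \sum_{i=1}^N R_{i,m}((q,p)) : (q,p) \in \Theta_d(\epsilon)\right\} \geq 0\right\}.
\end{align*}
Second, justify this inclusion: on the left-hand event, the maximizer $\left(\hat Q^N,\hat P^M\right)$ is itself a feasible point in the supremum on the right, and by the MLE property together with the identification of $\sum_{m,i} R_{i,m}$ with the (rescaled) likelihood difference, the value attained there is nonnegative. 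Third, apply monotonicity of probability to the inclusion and then invoke the hypothesis \eqref{con} with the choice $c_p = 0$, giving
\begin{align*}
\limsup_{M,N \to \infty} \mathbb P\left(\left(\hat Q^N, \hat P^M\right) \in \Theta_d(\epsilon)\right) \leq \lim_{M,N \to \infty} \mathbb P\left(\sup\left\{\textstyle\sum_{m,i} R_{i,m}((q,p)) : (q,p) \in \Theta_d(\epsilon)\right\} \geq 0\right) = 0,
\end{align*}
which is the desired conclusion.

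The step I expect to require the most care is the second one, namely the clean identification of $\sum_{m,i} R_{i,m}$ with the likelihood difference and the handling of the indicator/truncation built into the definition of $R_{i,m}$. I must check that restricting to the event $\{\exp(\ell(\theta|X_{i,m})) > 0\}$ does not weaken the inequality: since the MLE is defined as a maximizer of the log-likelihood, it is automatically supported on parameter values where the likelihood of the observed data is strictly positive, so the terms set to zero in $R_{i,m}$ contribute nothing at the maximizer and the inequality $\ell\!\left(\hat Q^N,\hat P^M\right) \geq \ell(q^0,p^0)$ transfers faithfully to $\sum_{m,i} R_{i,m}\!\left(\hat Q^N,\hat P^M\right) \geq 0$. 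Everything else is a routine application of the monotonicity of measure and the stated hypothesis; no compactness or continuity input is needed here beyond what is implicit in the existence of the argmax.
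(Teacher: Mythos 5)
Your proposal is correct and follows exactly the argument the paper relies on (the lemma is imported from Hoadley's 1971 treatment, where it is proved in precisely this way): the MLE property yields $\sum_{i,m} R_{i,m}\left(\hat Q^N, \hat P^M\right) \geq 0$ almost surely --- the truncation in the definition of $R_{i,m}$ is indeed harmless because the likelihood at $(q^0,p^0)$ is a.s.\ positive, so the likelihood at the maximizer is too --- which gives the event inclusion $\left\{\left(\hat Q^N, \hat P^M\right) \in \Theta_d(\epsilon)\right\} \subseteq \left\{\sup_{(q,p) \in \Theta_d(\epsilon)} \sum_{i,m} R_{i,m}((q,p)) \geq 0\right\}$, and monotonicity plus the hypothesis \eqref{con} with $c_p = 0$ concludes. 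No gaps; your handling of the positive-likelihood subtlety is the only point requiring care, and you addressed it correctly.
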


\begin{lemma}\label{lemma:uniformly}
    Conditions C2, C3(i) and C4(i) imply that there exists for each $(q,p) \in \Theta_d(\epsilon)$ a $\rho((q,p)) \leq \rho^*$ for which
    \begin{align*}
        r_{i,m}^{(B)}((q,p), \rho(q,p)) < r_{i,m}^{(B)}((q,p)) - \bar r^{(B)}((q,p))/2. 
\end{align*}

\end{lemma}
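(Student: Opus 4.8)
The plan is to follow the Wald--Hoadley localization scheme: around each fixed $(q,p)\in\Theta_d(\epsilon)$ I shrink the radius until the expected truncated supremum over the ball exceeds the expected truncated value at the centre by strictly less than half the (negative) average drift $\bar r^{(B)}((q,p))$. First I would record the monotonicity coming from upper semicontinuity. By C2 the map $t\mapsto\exp(\ell(t\mid X_{i,m}))$ is almost surely upper semicontinuous uniformly in $(i,m)$, hence so is $t\mapsto R_{i,m}(t)$ wherever it is finite, and therefore $\rho\mapsto R_{i,m}((q,p),\rho)=\sup\{R_{i,m}(t):d(t,(q,p))\le\rho\}$ is nonincreasing in $\rho$ with $R_{i,m}((q,p),\rho)\downarrow R_{i,m}((q,p))$ almost surely as $\rho\downarrow 0$. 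Since the truncation $x\mapsto x\vee(-B)$ is monotone and commutes with the supremum, the same monotone convergence $R^{(B)}_{i,m}((q,p),\rho)\downarrow R^{(B)}_{i,m}((q,p))$ holds almost surely.

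Next I would upgrade this almost-sure convergence to convergence of expectations, uniformly in $(i,m)$. Set $D_{i,m}(\rho):=R^{(B)}_{i,m}((q,p),\rho)-R^{(B)}_{i,m}((q,p))\ge 0$; these decrease to $0$ almost surely as $\rho\downarrow0$. Since $R^{(B)}\le R^{(0)}$ and $R^{(B)}_{i,m}((q,p))\ge -B$, one has the domination $0\le D_{i,m}(\rho)\le R^{(0)}_{i,m}((q,p),\rho)+B$. Condition C3(i) bounds $\mathbb E\big(R^{(0)}_{i,m}((q,p),\rho)^{1+\delta}\big)\le\gamma$ uniformly over $(i,m)$ and over $0\le\rho\le\rho^*$, making the family $\{D_{i,m}(\rho)\}_{i,m}$ uniformly integrable uniformly in $(i,m)$: splitting the tail at level $T$ gives $\mathbb E\big(D_{i,m}(\rho)\mathbf 1_{D_{i,m}(\rho)>T}\big)\le\gamma'/T^{\delta}$ uniformly, while on $\{D_{i,m}(\rho)\le T\}$ dominated convergence drives the contribution to $0$ as $\rho\downarrow0$. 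Consequently $\sup_{i,m}\mathbb E(D_{i,m}(\rho))\to 0$ as $\rho\downarrow 0$.

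Finally I would extract the negative budget and fix the radius. By C4(i), $\bar r:=\lim_{M,N\to\infty}\bar r_{N,M}((q,p))<0$ for $(q,p)\in\Theta_d(\epsilon)$; since $R^{(B)}\downarrow R$ as $B\uparrow\infty$, monotone convergence gives $\bar r^{(B)}((q,p))\downarrow\bar r$, so after fixing $B$ large enough we have $\bar r^{(B)}((q,p))<0$ and a positive budget $\beta:=-\bar r^{(B)}((q,p))/2>0$. By the previous paragraph choose $\rho((q,p))\le\rho^*$ with $\sup_{i,m}\mathbb E(D_{i,m}(\rho((q,p))))<\beta$; then for every $(i,m)$ one has $r^{(B)}_{i,m}((q,p),\rho((q,p)))=r^{(B)}_{i,m}((q,p))+\mathbb E(D_{i,m}(\rho((q,p))))<r^{(B)}_{i,m}((q,p))-\bar r^{(B)}((q,p))/2$, which is the assertion. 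I expect the main obstacle to be the uniformity in $(i,m)$ in the middle step: because the metric $d$ weights the $i$-th individual and $m$-th marker by $2^{-i}$ and $2^{-m}$, a ball of radius $\rho$ leaves the high-index coordinates essentially unconstrained, so the bounded-part convergence is not automatic and must be obtained from an equicontinuity estimate. Here the admixture structure rescues the argument: $X_{i,m}$ takes only the values $0,1,2$, so $R_{i,m}(t)$ is an explicit function of $c_{i,m}(t)\in[0,1]$ and of the data, and on the finitely many coordinates that a small ball actually constrains these functions share a common modulus of continuity, supplying the required uniform control.
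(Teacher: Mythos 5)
Your proof skeleton is exactly the one the paper implicitly relies on: the paper gives no proof of this lemma at all, but defers to \cite{hoadley1971}, and Hoadley's Lemma~2 argument is precisely your three steps (monotone a.s.\ convergence of the ball-suprema from upper semicontinuity, uniform integrability from C3(i), negative budget from C4(i)). Your first two steps are correct for each \emph{fixed} $(i,m)$. The genuine gap is the one you flag at the end, the uniformity in $(i,m)$, and your proposed rescue is wrong, not merely unfinished. The function $R_{i,m}(t)$ depends on $t$ only through the coordinates $(q_{i,\cdot},p_{\cdot,m})$, and moving only the column $p_{\cdot,m}$ costs at most $C_K 2^{-m}$ in the metric $d$. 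Hence for every fixed $\rho>0$ and all but finitely many $m$ (uniformly in $i$), the ball $\{t:d(t,(q,p))\le\rho\}$ puts no constraint whatsoever on $p_{\cdot,m}$; since $\langle q_{i,\cdot}',p_{\cdot,m}'\rangle$ then sweeps all of $[0,1]$, one gets
\begin{align*}
R_{i,m}((q,p),\rho)=\sup_{c'\in[0,1]}\Bigl(X_{i,m}\log c'+(2-X_{i,m})\log(1-c')\Bigr)-\ell\left(q^0,p^0|X_{i,m}\right),
\end{align*}
independently of $\rho$. The expected gap $\mathbb E\left(D_{i,m}(\rho)\right)$ is therefore bounded away from $0$ uniformly in $\rho$: for instance, if $c_{i,m}=c^0_{i,m}=1/2$ a direct computation gives $\mathbb E\left(D_{i,m}(\rho)\right)=\log 2$ for every $\rho>0$. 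So $\sup_{i,m}\mathbb E\left(D_{i,m}(\rho)\right)$ does not tend to $0$ as $\rho\downarrow 0$, and your final selection of $\rho((q,p))$ is impossible whenever the budget $-\bar r^{(B)}((q,p))/2$ is smaller than such a constant. The obstruction sits in the coordinates the ball does \emph{not} constrain, which is exactly what a ``common modulus of continuity on the finitely many constrained coordinates'' cannot repair. The only way to close this step is to read C2's ``upper semicontinuous uniformly in $(i,m)$'' as a standing equicontinuity hypothesis with respect to $d$ and carry that uniformity through the expectation step, as Hoadley does in his finite-dimensional compact setting; it cannot be derived from the structure of the Admixture Model, where the Lipschitz constants with respect to $d$ grow like $2^{\max(i,m)}$.

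Two further points. First, your extraction of the budget interchanges limits without justification: $\bar r^{(B)}((q,p))\downarrow\bar r((q,p))$ as $B\uparrow\infty$ requires $\mathbb E\left((-B-R_{i,m}((q,p)))^{+}\right)\to 0$ uniformly in $(i,m)$, i.e.\ uniform integrability of the \emph{negative} tails, which C3(i) (a bound on positive parts only) does not supply; this is why Hoadley imposes the drift condition on truncated means directly. Note also that the paper later invokes the present lemma with $B=0$, for which $\bar r^{(0)}\ge 0$ and the asserted strict inequality can never hold, since $r^{(0)}_{i,m}((q,p),\rho)\ge r^{(0)}_{i,m}((q,p))$ always. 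Second, the reason your broken step does not yield an outright counterexample to the lemma is a defect of the formulation itself: every $(q,p)\in\Theta$ is at $d$-distance $0$ from $\mathcal M$ (modify all coordinates beyond index $L$ at a $d$-cost of order $2^{-L}$ without changing the Ces\`aro average that defines $\mathcal M$), so $\Theta_d(\epsilon)$ is empty and the lemma is vacuously true as stated. A correct non-vacuous treatment has to change either the metric or the definition of $\mathcal M$ before this lemma, and any proof of it, can carry real content.
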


Additionally, we need a lemma about the compactness of the set $\Theta_{d}(\epsilon).$

\begin{lemma}\label{lemma:compact}
  The set $\Theta_{d}(\epsilon)$ is compact with respect to the metric $d.$
\end{lemma}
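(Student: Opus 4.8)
The plan is to exhibit $\Theta$ as a countable product of compact simplices, observe that $d$ metrizes the associated product topology so that $\Theta$ is itself compact, and then show that $\Theta_d(\epsilon)$ is a closed subset of $\Theta$; since a closed subset of a compact space is compact, this gives the claim. Because $(\Theta,d)$ is a metric space, I would work throughout with sequential compactness, which is here equivalent to compactness. For the compactness of $\Theta$, note that each coordinate $q_{i,\cdot}$ ranges over $\mathbb{S}^K$ and each $p_{\cdot,m}$ over $\mathbb{S}^{J_m}$ (in the bi-allelic case, over $[0,1]$), all of which are compact. Given a sequence $(q^{(n)},p^{(n)})_n$ in $\Theta$, I would use compactness of each factor together with a diagonal extraction to obtain a subsequence along which $q_{i,\cdot}^{(n)} \to q_{i,\cdot}^\ast$ and $p_{\cdot,m}^{(n)} \to p_{\cdot,m}^\ast$ coordinatewise for every $i$ and $m$. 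The point is that coordinatewise convergence implies $d$-convergence: since the simplices have bounded diameter, the tails $\sum_{i>I} 2^{-i}|q_{i,\cdot}^{(n)}-q_{i,\cdot}^\ast|$ and $\sum_{m>M} 2^{-m}|p_{\cdot,m}^{(n)}-p_{\cdot,m}^\ast|$ are made arbitrarily small uniformly in $n$ by taking $I,M$ large, while the finitely many head terms vanish by coordinatewise convergence. Hence every sequence in $\Theta$ has a $d$-convergent subsequence, so $(\Theta,d)$ is compact.

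For the closedness of $\Theta_d(\epsilon)$, I would use that the distance-to-a-set map $g(q,p) := \inf\{d((q,p),(q',p')):(q',p')\in\mathcal M\}$ is $1$-Lipschitz, hence continuous, irrespective of any topological property of $\mathcal M$; this is the standard inequality $|g(x)-g(y)| \le d(x,y)$. Since $\Theta_d(\epsilon) = g^{-1}([\epsilon,\infty))$ is the preimage of a closed set under a continuous map, it is closed in $\Theta$.

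Combining the two steps, $\Theta_d(\epsilon)$ is a closed subset of the compact space $\Theta$ and is therefore compact. In the sequential language: given a sequence in $\Theta_d(\epsilon)$, extract a $d$-convergent subsequence with limit $(q^\ast,p^\ast)\in\Theta$ as in the first step; by continuity of $g$ the bound $g\ge\epsilon$ passes to the limit, so $(q^\ast,p^\ast)\in\Theta_d(\epsilon)$, giving sequential compactness.

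I expect the main obstacle to be the verification that $d$-convergence coincides with coordinatewise convergence on the infinite product, since this is the only place where the specific $2^{-i}$, $2^{-m}$ weighting of the metric is genuinely exploited. Once the tail estimate is in place the argument is routine, but care is needed to make the tail bound uniform in $n$ so that the diagonal subsequence actually converges in the metric $d$ and not merely coordinatewise.
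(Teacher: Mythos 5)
Your proof is correct and follows essentially the same route as the paper's: compactness of $\Theta$ as a countable product of compact simplices (the paper cites Tychonoff, you prove it directly by diagonal extraction), followed by closedness of $\Theta_d(\epsilon)$ as the preimage of a closed set under the continuous distance-to-$\mathcal M$ map. If anything, your explicit verification that coordinatewise convergence and $d$-convergence agree fills in a detail the paper leaves implicit when passing from product-topology compactness to compactness with respect to the metric $d$.
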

\begin{proof}
   The set $\Theta$ is compact with respect to the metric $d$ according to the Theorem of Tychonoff. Moreover, $\Theta_{d}(\epsilon)$ is a closed subset of the compact set, which implies the claim. The latter holds since 
   $$\Theta_d(\epsilon) = \{(q,p) \in \Theta: \min\{d((q,p), \mathcal M)\} \in [\epsilon, 2]\},$$
   i.e. the set is an inverse image of a closed set for a continuous function.
\end{proof}

Finally, we prove weak consistency of the MLEs in the Admixture Model.

\begin{proof}[Proof of Theorem \ref{th:wc_us}]
According to Lemma \ref{lemma:RM}, we just have to prove that equation \eqref{con} holds. Therefore, we check whether constraints $C2, C3(i), C4(i)$ and $C5$ hold.  Conditions $C2$ and $C5$ are trivial.

\begin{itemize}
    \item[C3) (i)] We choose $\delta = 1$ and define $D\left(\rho, q_{i, \cdot}^0, p_{\cdot, m}^0\right) := \sup\{X_{i,m} \ln(\langle t_1, t_2\rangle) + (2-X_{i,m}) \ln(1-\langle t_1, t_2\rangle): d\left((t_1, t_2), \left(q^0, p^0\right)\right) \leq \rho, (t_1, t_2) \in \Theta_{d}(\epsilon)\}.$ Note that it holds $D\left(\rho, q_{i, \cdot}^0, p_{\cdot, m}^0\right) \leq 0$. We have
    \begin{align*}
        &\left(R_{i,m}^{(0)}((q,p), \rho)\right)^2 \\
        &= \left(\left(D\left(\rho, q_{i, \cdot}^0, p_{\cdot, m}^0\right) - X_{i,m} \ln\left(c^0_{i,m}\right) - (2-X_{i,m}) \ln\left(1-c^0_{i,m}\right)\right)^{(0)}\right)^2  \\
        &\begin{cases}
           = 0, \text{ if } D(\rho, q^0_{i, \cdot}, p^0_{\cdot, m}) < X_{i,m} \ln(c^0_{i,m}) + (2-X_{i,m}) \ln(1-c^0_{i,m})\\
            \leq (- X_{i,m} \ln(c^0_{i,m}) - (2-X_{i,m}) \ln(1-c^0_{i,m}))^2 , \text{ else}
        \end{cases} \\
        &\leq (X_{i,m} \ln(c^0_{i,m}) + (2-X_{i,m}) \ln(1-c^0_{i,m}))^2.
    \end{align*}
    Hence, 
    \begin{align*}
        &\mathbb E\left(R^{(0)}_{i,m}(\theta, \rho)^2\right) \\
        &\leq  \mathbb E\left((X_{i,m} \ln(c^0_{i,m}) + (2-X_{i,m}) \ln(1-c^0_{i,m})))^2\right) \\
        &=  \mathbb E\left((X^2_{i,m} \ln^2(c^0_{i,m}) + (2-X_{i,m})^2 \ln^2(1-c^0_{i,m}) + 2X_{i,m} \ln(c_{i,m}^0) \ln(1-c_{i,m}^0)\right) \\
        &= \left(c^0_m\right)^2 4\ln(c^0_{i,m})^2 + 2c^0_{i,m} \left(1-c^0_{i,m} \right)\ln(c^0_{i,m})^2 \ln(1 - c^0_{i,m})^2 \\
        & + 4 \left(1-c^0_{i,m} \right)^2 \ln(1 - c^0_{i,m})^2\\
        & \leq 4 + 1+4 \leq \gamma  < \infty.
    \end{align*}
    \item[C4) (i)] It holds \begin{align*}
    \frac 1 2\mathbb E(\ell(q,p|X_{i,m}) - \ell(q^0, p^0|X_{i,m})) 
    &= c^0_{i,m} \ln\left(\frac{c_{i,m}}{c^0_{i, m}}\right) +  (1-c^0_{i,m})\ln\left(\frac{1-c_{i,m}}{1-c^0_{i,m}}\right).
\end{align*}
With the derivation with respect to $c_{i,m}$
\begin{align*}
 \frac{\partial}{\partial c_{i,m}} \frac 1 2\mathbb E(\ell(q,p|X_{i,m}) - \ell(q^0, p^0|X_{i,m})) = \frac{c^0_{i,m}}{c_{i,m}} - \frac{1-c^0_{i,m}}{1-c_{i,m}},
\end{align*}
we see that $c_{i,m} = c^0_{i,m}$ is the only maximal point of
$c_{i,m} \mapsto \mathbb E(\ell(q,p|X_{i,m}) - \ell(q^0, p^0|X_{i,m})).$
Hence, 
$$ \frac 1 2\mathbb E(\ell(q,p|X_{i,m}) - \ell(q^0, p^0|X_{i,m})) \begin{cases}
        < 0, c_{i,m}\neq c^0_{i,m} \\
        = 0, c_{i,m}= c^0_{i,m}
    \end{cases}$$
and because of our definition of the set $\mathcal M$, for any $\epsilon > 0$, there exists a constant $c( \epsilon) < 0$ with 
\begin{align*}
 \frac{1}{M} \frac{1}{N} \sum_{m = 1}^M \sum_{i = 1}^{N}\mathbb E(\ell(q,p|X_{i,m}) - \ell(q^0, p^0|X_{i,m})) 
\xrightarrow[]{M, N \to \infty} c(\epsilon) < 0   
\end{align*}
for all $(q, p) \in \Theta_{d}(\epsilon)$.
\end{itemize}

From C2, C3 (i) and C4(i) we infer, according to Lemma \ref{lemma:uniformly}
\begin{align}
        r_{i,m}^{(B)}((q, p), \rho(q, p)) < r_{i,m}^{(B)}((q,p)) - \bar r^{(B)}((q,p))/2 \label{3.5}
\end{align}
for all $(q,p) \in \Theta_{d }(\epsilon).$ Let $S(c, r)$ be an open ball with middle point $c$ and radius $r.$
Because of Lemma \ref{lemma:compact}, there exist $q^1, \ldots, q^g, p^1, \ldots, p^g \in \Theta_{d}(\epsilon)$ for which 
\begin{align*}
    \Theta_{d}(\epsilon) \subset \bigcup_{j = 1}^g S\left((q^j, p^j), \rho\left((q^j, p^j)\right)\right). %\forall i \in \{1, \ldots, g\}.
\end{align*}
From \eqref{3.5} and C4(i) we know
\begin{align*}
    \textup{lim sup}_{N, M \to \infty} \bar{r}_{N,M}^{(0)}\left((q^j, p^j), \rho\left((q^j, p^j)\right)\right) < 0. 
\end{align*}
Hence and because of C3(i), we can apply Theorem A4 in \cite{hoadley1971} to $R_{i,m}((q^j, p^j), \rho\left((q^j, p^j)\right)), j = 1,..., g.$
Consequently, it holds
\begin{align*}
       \lim_{M,N \to \infty} \mathbb P\left( \sum_{i = 1}^N \sum_{m = 1}^MR_{i,m}\left((q^j, p^j), \rho\left((q^j, p^j)\right)\right) \geq c_p\right) = 0
\end{align*}
for any $c_p \in \mathbb R.$ Since
\begin{align*}
       &\sup\left\{\sum_{i=1}^N \sum_{m=1}^M R_{i,m}((q,p)): (q,p) \in \Theta_{d}(\epsilon)\right\} \\
       &\hspace{1cm} \leq \max_{j \in \{1, \ldots, g\}}\left\{\sum_{i = 1}^N \sum_{m = 1}^MR_{i,m}\left((q^j, p^j), \rho\left((q^j, p^j)\right)\right)\right\},
\end{align*}
the claim is proven.

\end{proof}

Moreover, the consistency with respect to the euclidean norm for finite dimensions is a direct consequence of the Theorem that is stated in \cite{hoadley1971}.

\begin{corollary}
We only consider one single individual. Let the allele frequencies be known. We assume in addition that 
$$\textup{(AU)}\, \lim_{M \to \infty} \frac{1}{M} \sum_{m=1}^M c^0_{i,m} \ln\left(\frac{c_{i,m}}{c^0_{i, m}}\right) +  (1-c^0_{i,m})\ln\left(\frac{1-c_{i,m}}{1-c^0_{i,m}}\right) < 0$$
for every $q \neq q^0.$
Then, it holds
\begin{align*}
    \lim_{M,N \to \infty} \mathbb P\left(|\hat Q^{M} - q^0| \geq \epsilon\right) =0
\end{align*}
for any $\epsilon > 0.$
\end{corollary}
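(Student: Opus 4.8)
The plan is to invoke Hoadley's consistency theorem \citep{hoadley1971} directly, exploiting that restricting to a single individual with known allele frequencies puts us in a genuinely finite-dimensional setting where his result applies off the shelf, without the infinite-dimensional adaptation that Theorem \ref{th:wc_us} required. First I would note that, with the frequencies $p^0_{\cdot,m}$ fixed and only one individual under consideration, the parameter is $q_{i,\cdot} \in \mathbb{S}^K$, so the parameter space is the compact $(K-1)$-dimensional simplex, and the observations $X_{i,1},\dots,X_{i,M}$ are independent but not identically distributed because $c_{i,m}=\langle q_{i,\cdot}, p^0_{\cdot,m}\rangle$ varies with $m$ --- which is exactly Hoadley's independent, non-identically-distributed framework.

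Next I would verify Hoadley's hypotheses, nearly all of which were already discharged in the proof of Theorem \ref{th:wc_us}. Upper semicontinuity (C2) and measurability (C5) are immediate, and the domination bound (C3(i)) carries over verbatim, since the estimate on $\mathbb{E}(R^{(0)}_{i,m}((q,p),\rho)^2)$ used only that $X_{i,m}\in\{0,1,2\}$ and the form of the log-likelihood. The separation condition (C4(i)) is precisely what (AU) provides: as computed there,
\begin{align*}
\tfrac12\mathbb{E}\bigl(\ell(q,p|X_{i,m})-\ell(q^0,p^0|X_{i,m})\bigr)=c^0_{i,m}\ln\!\left(\tfrac{c_{i,m}}{c^0_{i,m}}\right)+(1-c^0_{i,m})\ln\!\left(\tfrac{1-c_{i,m}}{1-c^0_{i,m}}\right),
\end{align*}
which is $\le 0$ with equality iff $c_{i,m}=c^0_{i,m}$, and (AU) asserts that its Ces\`aro average over $m$ stays strictly negative for every $q\neq q^0$.

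The key simplification, and the reason this is a corollary rather than a fresh argument, is that finite-dimensionality lets the weighted product metric $d$ be replaced by the Euclidean norm. Because (AU) holds for \emph{every} $q\neq q^0$, the limiting set $\mathcal{M}$ collapses to the singleton $\{q^0\}$: there is no non-uniqueness from estimating frequencies, and label switching is vacuous for a single individual with fixed frequencies. On the compact simplex the metrics $d$ and $|\cdot|$ induce the same topology, so the conclusion of Hoadley's theorem reads $\lim_{M\to\infty}\mathbb{P}(|\hat{Q}^M-q^0|\ge\epsilon)=0$, as claimed.

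The main obstacle I expect is conceptual rather than computational: one must confirm that (AU) genuinely forces $q^0$ to be the \emph{unique} maximizer of the limiting expected log-likelihood, i.e.\ that the map $q\mapsto(c_{i,m})_m$ is identifiable over the simplex. For a single individual with fixed, suitably varied allele frequencies this is exactly the content of (AU), so no separate identifiability argument is needed; everything else is inherited directly from the proof of Theorem \ref{th:wc_us}.
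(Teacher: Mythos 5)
Your proposal is correct and matches the paper's approach: the paper disposes of this corollary in one sentence, stating that consistency in the Euclidean norm for finite dimensions is a direct consequence of Hoadley's theorem, which is precisely what you argue. Your verification of C2, C3(i), C5 by inheritance from the proof of Theorem \ref{th:wc_us}, with (AU) supplying the separation condition C4(i) and compactness of the simplex replacing the infinite-dimensional metric argument, simply spells out the details the paper leaves implicit.
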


Constraint (AU) seems at the first glance difficult to prove. However, for example $K = 2$ every $p_{\cdot,m}$ with $p_{\cdot, m}  = p_{\cdot, m+1}$ and $p_{1,m} \neq p_{2,m}$ fulfills the constraint. Then, the condition holds. More generally, we can apply  Pinsker's inequality (which is widely considered, see e.g. \cite{fedotov2003refinements}), i.e. it holds
    \begin{align*}
        c_m^0 \ln(c_m/c_m^0) + (1-c^0_m) \ln((1-c_m)/(1-c_m^0)) 
        &\leq - \frac{1}{2} |c_m - c_m^0|_2^2.
    \end{align*}
    Hence for $K = 2$, if 
    \begin{align*}
       \frac{1}{M} \sum_{m=1}^M  (p_{2,m} - p_{1,m})^2 > 0,
    \end{align*}
    the condition is fulfilled. A trivial example for this is if $|p_{2,m} - p_{1,m}| \geq \delta$ for a $\delta > 0.$
Moreover, constraint (AU) covers cases in which the MLE is non-unique in the supervised setting. An example for this, as mentioned in (\cite{pfaffelhuber2022central}), would be $p_{\cdot, m} = (p_{1,m}, p_{2,m}, \alpha p_{1,m} + (1-\alpha) p_{2,m})$ for all $m \in \{1,..., M\}.$ For these cases, both $q = (0,0,1)$ and $q = (\alpha, 1-\alpha, 0)$ yield to the same likelihood.

\subsection{Central Limit Results}
In this chapter, we prove central limit theorems using the consistency of the MLEs, either when the parameter space is open or when the true parameter lies on the boundary of the parameter space. Therefore, we write $A \succ 0$ for a positive definite matrix $A.$ Please note that we will consider a $K-1$ dimensional parameter space instead of a $K$ dimensional parameter space. However, the claims for both parameter spaces are equivalent.

\subsubsection{Open Parameter space}
\citet{hoadley1971} (Theorem 2) has already proven a theorem that names conditions for the asymptotic normality of the MLE for independently, non-identically distributed random variables. This theorem is stated below.

\begin{theorem}[Central Limit Result for the MLE]\label{app:CLT}
We name the derivation of the log likelihood with respect to the $k^{th}$ and the $h^{th}, j^{th}$ parameter by $\dot \Phi_k(X_{i,m}, (q, p)), \ddot{\Phi}_{hj}(X_{i,m}, (q, p))$ , respectively.
The conditions
    \begin{itemize}
\item[N1] $\Theta$ is an open subset of $\mathbb R^{(N_C \times (K-1)) \times (K \times M_C)}$,
\item[N2] It holds $\left(\hat q^{N_C}, \hat p^{M_C}\right) \xrightarrow[]{M, N\to \infty}_p \left(q^0, p^0\right),$
\item[N3] $\dot{\Phi}(X_{i,m}, (q,p)), \ddot{\Phi}(X_{i,m}, (q,p))$ exist almost surely,
\item[N4] $\ddot{\Phi}(X_{i,m}, (q,p))$ is a continuous function of $(q,p)$, uniformly in $(i,m)$ and a measurable function of $X_{i,m},$
\item[N5] It holds $\mathbb E_\theta(\dot{\Phi}_\ell(X_{i,m}, (q,p))) = 0 \forall \ell$,
\item[N6] $\Gamma_\ell(\theta) = \mathbb E_\theta(\dot{\Phi}(X_{i,m}, (q,p)) \dot{\Phi}(X_{i,m}, (q,p))^\intercal) = - \mathbb E_\theta(\ddot{\Phi}(X_{i,m}, (q,p))),$
\item[N7] It holds $\frac{1}{MN}\sum_{i=1}^N\sum_{m=1}^M   \Gamma_{i,m}(\tilde q, \tilde p)\xrightarrow[]{D \to \infty} \bar \Gamma(\tilde q, \tilde p)$, $\bar \Gamma(\tilde q, \tilde p) \succ 0 $,
\item[N8] There exists $\delta > 0$ with 
$$ \sum_{i=1}^N \sum_{m=1}^M \mathbb E(|\lambda^\intercal\dot{\Phi}(X_{i,m}, (q^0,p^0))|^{2+\delta})/(MN)^{(2+\delta)/2} \xrightarrow[]{D \to \infty} 0$$ for all $\lambda \in \mathbb R^{(N_C \times (K-1)) \times (K \times M_C)}$.
\item[N9] There exists $\gamma, \epsilon > 0$ and random variables $B_{\ell,hj}(Y_k)$ with \\
(i) $\sup\{|\ddot{\Phi}_{hj}(X_{i,m}, t)|: \|t - (q^0, p^0)\| \leq \epsilon\} \leq B_{\ell,hj}(X_{i,m})$,\\
(ii) $\mathbb E(|B_{\ell,hj}(X_{i,m})|^{1+\delta}) \leq \gamma$,
    \end{itemize}
imply
\begin{align*}
   \left( \sqrt{M}(\hat Q^{N_C} - q^0), \sqrt{N}(\hat P^{M_C} - p^0)\right) \xRightarrow{M, N \to \infty} \mathcal N\left(0, \Gamma^{-1}\left(\tilde q^0, \tilde p^0\right)\right).
\end{align*}
\end{theorem}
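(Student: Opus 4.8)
\emph{The plan} is to run the classical expansion-of-the-score argument for maximum likelihood estimators, but with a \emph{heterogeneous normalization} reflecting the block structure of the problem: the ancestry of individual $i$ is informed by the $M$ observations $X_{i,1},\dots,X_{i,M}$, whereas the allele frequency at marker $m$ is informed by the $N$ observations $X_{1,m},\dots,X_{N,m}$. Accordingly I would work with the scaling matrix $D := \mathrm{diag}\big(\sqrt{M}\,\mathrm{Id}_{N_C(K-1)},\ \sqrt{N}\,\mathrm{Id}_{M_C K}\big)$ and write $L := \sum_{i=1}^N\sum_{m=1}^M \ell((q,p)|X_{i,m})$ for the unnormalized log-likelihood, whose gradient and Hessian are assembled from $\dot\Phi$ and $\ddot\Phi$.

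First, by openness (N1) and consistency (N2) the MLE $(\hat Q^{N_C},\hat P^{M_C})$ lies, with probability tending to one, in the interior of $\Theta$, so the first-order condition $\nabla L(\hat Q^{N_C},\hat P^{M_C}) = 0$ holds. A mean-value expansion of the score applied row by row around $(q^0,p^0)$ then gives $0 = \nabla L(q^0,p^0) + \nabla^2 L(\bar\theta)\,\big((\hat Q^{N_C},\hat P^{M_C}) - (q^0,p^0)\big)$ for intermediate points $\bar\theta$ on the segment joining $(q^0,p^0)$ to the MLE; rearranging and inserting $D$,
\begin{align*}
 D\big((\hat Q^{N_C},\hat P^{M_C}) - (q^0,p^0)\big) = -\big[D^{-1}\nabla^2 L(\bar\theta) D^{-1}\big]^{-1}\, D^{-1}\nabla L(q^0,p^0).
\end{align*}

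Second, I would treat the two factors separately. For the score factor $D^{-1}\nabla L(q^0,p^0)$, its $q_i$-block equals $\tfrac1{\sqrt M}\sum_{m=1}^M \partial_{q_i}\ell((q^0,p^0)|X_{i,m})$ and its $p_m$-block equals $\tfrac1{\sqrt N}\sum_{i=1}^N \partial_{p_m}\ell((q^0,p^0)|X_{i,m})$; each is a normalized sum of independent, mean-zero (N5) summands, so the multivariate Lyapunov CLT applies through the $2+\delta$ moment bound (N8), and N6 identifies the per-block covariances with $\bar\Gamma_{(q,q),i}$ and $\bar\Gamma_{(p,p),m}$. Crucially, the covariance between a $q_i$-block and a $p_m$-block involves only the single shared observation $X_{i,m}$, hence is of order $(MN)^{-1/2}\to 0$; this is exactly what forces the limiting law to be the block-diagonal $\mathcal N(0,\bar\Gamma(\tilde q^0,\tilde p^0))$. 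For the Hessian factor, the diagonal blocks $\tfrac1M\sum_m \partial^2_{q_i}\ell$ and $\tfrac1N\sum_i \partial^2_{p_m}\ell$ converge to $\bar\Gamma_{(q,q),i}$ and $\bar\Gamma_{(p,p),m}$ by N7, while the cross $q$-$p$ entries are again $O((MN)^{-1/2})$ single-observation terms and vanish. To evaluate these at the random point $\bar\theta$ rather than at $(q^0,p^0)$, I would upgrade the pointwise law of large numbers to one that is locally uniform about $(q^0,p^0)$, using the uniform-in-$(i,m)$ continuity of $\ddot\Phi$ (N4) together with the integrable envelope $B_{\ell,hj}$ having a $1+\delta$ moment (N9), which supplies the required uniform integrability; consistency then guarantees $\bar\theta$ lies in that neighborhood eventually, so $D^{-1}\nabla^2 L(\bar\theta)D^{-1}$ converges in probability to $-\bar\Gamma(\tilde q^0,\tilde p^0)$.

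Finally, N7 yields $\bar\Gamma(\tilde q^0,\tilde p^0)\succ 0$, hence invertibility, and a Slutsky/continuous-mapping argument combines the two factors into $\bar\Gamma^{-1}\cdot\mathcal N(0,\bar\Gamma) = \mathcal N(0,\bar\Gamma^{-1}(\tilde q^0,\tilde p^0))$, which is the claim. \emph{The main obstacle} I anticipate is not any single step but the bookkeeping around the two distinct rates $\sqrt M$ and $\sqrt N$: one must check that every cross $q$-$p$ contribution, both in the score covariance and in the Hessian, is genuinely subdominant (order $(MN)^{-1/2}$) relative to the within-block terms of order $M$ or $N$, since this is precisely what renders $\bar\Gamma$ block-diagonal and legitimizes replacing Hoadley's single-rate normalization by the matrix $D$. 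The secondary technical point is the locally uniform law of large numbers for the Hessian at $\bar\theta$, where N4 and N9 must be combined to pass from pointwise to uniform convergence.
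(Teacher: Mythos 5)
Your argument is correct in outline, but you should know that the paper does not prove this statement at all: it is imported verbatim (up to re-indexing) as Theorem 2 of \citet{hoadley1971}, and the paper's own work only begins afterwards, when it verifies conditions N1--N9 for the Admixture Model in the proof of Theorem \ref{th:clt_unsupervised}. What you have written is essentially the classical proof underlying Hoadley's theorem: interiority plus consistency (N1, N2) give the first-order condition, a row-wise mean-value expansion of the score, a Cram\'er--Wold/Lyapunov argument for the score factor (N5, N6, N8), a locally uniform law of large numbers for the Hessian at the intermediate points (N4 for uniform continuity, N9 for the integrable envelope, N7 for the limit), and Slutsky to conclude. Where you genuinely go beyond both the paper and its citation is the two-rate bookkeeping: Hoadley's Theorem 2 concerns a single sequence of independent observations with one normalization, whereas the statement here is doubly indexed, with rate $\sqrt{M}$ for the ancestry block and $\sqrt{N}$ for the allele-frequency block. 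The paper attributes this two-index, two-rate version to Hoadley without comment; your scaling matrix $D$, together with the check that every cross $q$--$p$ contribution (both in the score covariance and in the Hessian) involves only the single shared observation $X_{i,m}$ and is therefore of order $(MN)^{-1/2}$, is precisely the justification needed for that adaptation and for the block-diagonal form of $\bar \Gamma\left(\tilde q^0, \tilde p^0\right)$. In short, your route is self-contained and closes a gap that the paper bridges only by citation, at the cost of redoing the standard machinery; the paper's route is economical, but its cited theorem does not literally cover the stated two-rate form.
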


Now, we are  ready to prove a CLT for the MLEs.

\begin{proof}[Proof of Theorem \ref{th:clt_unsupervised}]
We apply Theorem \ref{app:CLT}. Conditions $(N1) - (N4), (N7)$ are trivial. Conditions (N5) and (N6) hold, since we can change derivation and integral of $\ell(X_{i,m}, (q,p)).$
    \begin{itemize}
\item[(N8)] We only consider the derivation of $\ell(\tilde q, \tilde p|X)$ with respect to $\tilde p$. The claim for the derivation of the log-likelihood with respect to $\tilde q$ can be inferred in the same way. For $c^0_{i,m} \in (0,1), m \in \{1, ..., M_C\}$, it holds 
\begin{align*}
    \mathbb E\left(|\lambda ^{\top} \frac{d}{d\tilde p_{\cdot, m}} \ell(\tilde q^0,\tilde p^0|X_{i,m})|^3\right) 
    &= \mathbb E\left(\left|\lambda^{\top} \left( \frac{X_{i,m} \tilde q^0_{i,\cdot}}{\langle \tilde q^0_{\cdot}, \tilde p^0_{\cdot m} \rangle} -  \frac{(2- X_{i,m}) \tilde q^0_{i,\cdot}}{1 - \langle \tilde q^0_{\cdot}, \tilde p^0_{\cdot m} \rangle}\right)\right|^3\right) \\
&\leq c(\lambda) \mathbb E\left(\left| \frac{X_{i,m} - 2\langle \tilde q^0_{\cdot}, \tilde p^0_{\cdot m} \rangle}{\langle \tilde q^0_{\cdot}, \tilde p^0_{\cdot m} \rangle (1 - \langle \tilde q^0_{\cdot}, \tilde p^0_{\cdot m} \rangle}\right|^3\right) \\
&\leq c(\lambda) \mathbb E\left(\left(\frac{(2, ..., 2)}{\langle \tilde q^0_{\cdot}, \tilde p^0_{\cdot m} \rangle (1 - \langle \tilde q^0_{\cdot},\tilde p^0_{\cdot m} \rangle}\right)^3\right).
\end{align*}
Hence, we have
\begin{align*}
   \sum_{m = 1}^M \frac{\mathbb E(|\lambda^{\top} \dot{\Phi}(X_m,(q,p))|^3)}{M^{3/2}}
&\leq  c(\lambda) \sum_{m = 1}^M\frac{(8, ..., 8)}{(\langle q^0_{\cdot}, p^0_{\cdot m} \rangle)^3 (1 - \langle q^0_{\cdot}, p^0_{\cdot m} \rangle)^3M^{3/2}} \\
&
\xrightarrow[]{M \to \infty} 0,
\end{align*}
because of (***) in Assumption \ref{ass:invert}.
\item[(N9)] It holds
\begin{align*}
   - \mathbb E\left(\frac{\partial^2}{\left(\partial q^0_{i, \cdot}\right)^2} \ell(q^0, p^0|X_{i,m})\right) &= \begin{cases}\left(\frac{ 2  \underline{p^0_{s,m}}}{\langle q^0_{\cdot}, p^0_{\cdot m}\rangle}  + \frac{ 2  \underline{p^0_{s,m}}}{1 - \langle q^0_{\cdot}, p^0_{\cdot m}\rangle}\right), \textup{ if } \langle q^0_{i, \cdot}, p^0_{\cdot, m} \rangle \in (0,1) \\
  2  \underline{p^0_{s,m}} , \textup{ else}.  
\end{cases}
\end{align*}
Then, the claim is a direct consequence of our choice of the parameter space.
\end{itemize}
\end{proof}

So far, we just assumed that the Fisher Information is invertible. Let us consider this constraint in more detail:
    The matrix $\bar \Gamma\left(\tilde q^0, \tilde p^0\right)$ is a block matrix, i.e. we just have to ensure $\bar \Gamma_{var(q)}(\tilde q, \tilde p) \succ 0, \bar \Gamma_{var(p)}(\tilde q, \tilde p) \succ 0$. Therefore, we exclusively focus on $\bar \Gamma_{var(q)}(\tilde q, \tilde p)$. 
    First, we prove that $\bar \Gamma_{var(q)}(\tilde q, \tilde p) \succ 0$ exists. Recall the definition of $\underline{p_{K m}}$ and
$\underline{p_{s,m}}$ from Assumption \ref{ass:invert}.

Constraint (***) ensures that the limit exists. Hence, it remains to prove $\bar \Gamma_{var(q)}(\tilde q, \tilde p) \succ 0$. 
It holds
\begin{align*}
    \sum_{m: p_{\cdot, m} \in S_j^p} - \mathbb E\left(\frac{\partial^2}{\partial^2 q^0_{i, \cdot}} \ell(q^0, p^0|X_{i,m})\right) \succ 0 \, \forall j = 1,2,....
\end{align*}
Unfortunately, we cannot conclude from this to the positive-definiteness of the whole matrix. Note, however, for the application, this does not matter since the sum $ - \frac{1}{M}\sum_{m = 1}^M  \mathbb E\left(\frac{\partial^2}{\partial^2 q^0_{i, \cdot}} \ell(q^0, p^0|X_{i,m})\right)$
is for $M \in \mathbb N$ always positive definite under our constraints. Furthermore, if we restrict our parameter space to $(q,p): q_{i, k} > \epsilon, p_{k,m} > \epsilon,$ we can infer the positive definiteness of the matrix.

\subsubsection{Closed Parameter Space}

Consideration of a closed parameter space is in particular important, as the uniqueness of the MLE requires that there are some true ancestries and true allele frequencies on the boundary of the parameter space. First, we name the theorem that holds according to \cite{andrews1999estimation} for MLEs, if they are on the boundary of the parameter space. Afterwards, we prove that the constraints of this theorem hold for the MLE in the Admixture Model. 

\begin{definition}[Cone]
    A set $\Lambda \subset \mathbb R^\rho$ is a cone if 
    $$\left(\lambda \in \Lambda\right)\Rightarrow \left(a \lambda \in \Lambda \, \forall a \in \mathbb R_+\right).$$
\end{definition}

\begin{definition}[Locally Equal]
    A set $\Lambda$ is locally equal to a set $\mathcal S$ if it holds $\mathcal S \cap C(0, \epsilon) = \Lambda \cap C(0, \epsilon)$ for some $\epsilon > 0$ with an open cube $C(0,\epsilon).$
\end{definition}

To understand the theorem for a closed parameter space, we define 
\begin{align*}
    \ell(q,p|X) &= \ell(q,p|X) + \left((q, p)-\left(q^0, p^0\right)\right) \frac{\partial}{\partial (q, p)} \ell(q,p|X) \\
    & \hspace{1cm}- \frac 1 2  \left((q, p)-\left(q^0, p^0\right)\right)^2 \frac{\partial^2}{\partial (q, p)^2}\ell(q,p|X) + R_{M}(q^0|X).
\end{align*}

With this notation, we can name the theorem. 

\begin{theorem}[CLT if the parameter is on the boundary of the parameter space]\label{th:andrews}
Let $\dot{\Phi}(X_m, q^0), \ddot{\Phi}(X_m, q^0)$ be the first and second derivations of the log-likelihood function with respect to $q$, respectively. We recall the definition of $Z, \hat \lambda$ in Theorem \ref{th:uv_closed}.
The constraints
\begin{itemize}
  \item[A1] $\lim_{M \to \infty} \sup_{\theta \in \Theta: \|q-q^0\| \leq \gamma_M} \frac{|R_M(\theta|X)|}{(1 + \|\sqrt{M}(q-q^0)\|)^2} = 0$ for all $\gamma_M \to 0$,
    \item[A2] $\left(\frac{1}{\sqrt{M}} \dot{\Phi}(X_1,..., X_M, q^0), \frac 1 M \ddot{\Phi}(X_1,..., X_M, q^0)\right) \xRightarrow[]{M \to \infty} (Z, \bar{\Gamma}(q^0))$ with a non-singular and symmetric matrix $F,$
    \item[A3] $\sqrt{M}\left(\hat Q^M - q^0\right) = O_p(1)$,
    \item[A4] The set $\Theta - q^0$ is locally equal to a cone $\Lambda \subset \mathbb R^K.$
    \item[A5] $\Lambda$ is convex.
\end{itemize}
imply $\sqrt{M}(\hat Q^M - q^0) \xRightarrow[]{M \to \infty} \hat \lambda.$
\end{theorem}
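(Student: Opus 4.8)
The plan is to follow the localization-and-argmax strategy underlying \citet{andrews1999estimation}: reparametrize by the local coordinate $\lambda = \sqrt{M}\left(q - q^0\right)$, show that the recentered log-likelihood, viewed as a random function of $\lambda$, converges to a concave quadratic, and then transfer this convergence from the objectives to their maximizers. Concretely, I would introduce the localized objective
\begin{align*}
    Q_M(\lambda) := \ell\left(q^0 + \lambda/\sqrt{M} \mid X\right) - \ell\left(q^0 \mid X\right), \qquad \lambda \in \sqrt{M}\left(\Theta - q^0\right),
\end{align*}
so that $\sqrt{M}\left(\hat Q^M - q^0\right) = \textup{argmax}\left\{Q_M(\lambda) : \lambda \in \sqrt{M}\left(\Theta - q^0\right)\right\}$. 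Using the quadratic expansion that defines $R_M$ together with A1, the remainder is uniformly negligible on every ball $\{\|\lambda\| \le B\}$, so that there
\begin{align*}
    Q_M(\lambda) = \lambda^{\top}\frac{1}{\sqrt{M}}\dot{\Phi}\left(X_1,\ldots,X_M,q^0\right) - \frac{1}{2}\lambda^{\top}\left(-\frac{1}{M}\ddot{\Phi}\left(X_1,\ldots,X_M,q^0\right)\right)\lambda + o_p(1),
\end{align*}
uniformly in $\lambda$.

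By A2 the two leading coefficients converge jointly, the score to a Gaussian vector $W$ and the negative normalized Hessian to the positive definite Fisher information $\bar\Gamma\left(q^0\right)$, so $Q_M$ converges in distribution, as a random concave function on compacts, to
\begin{align*}
    Q(\lambda) = \lambda^{\top}W - \tfrac{1}{2}\,\lambda^{\top}\bar\Gamma\left(q^0\right)\lambda,
\end{align*}
where $Z := \bar\Gamma\left(q^0\right)^{-1}W \sim \mathcal N\left(0,\bar\Gamma^{-1}\left(q^0\right)\right)$ as in Theorem \ref{th:uv_closed}. Completing the square gives $Q(\lambda) = -\tfrac{1}{2}\left(\lambda - Z\right)^{\top}\bar\Gamma\left(q^0\right)\left(\lambda - Z\right) + \tfrac{1}{2}Z^{\top}\bar\Gamma\left(q^0\right)Z$, so maximizing $Q$ over $\Lambda$ is precisely the minimization defining $\hat\lambda$; since $\bar\Gamma\left(q^0\right) \succ 0$ and $\Lambda$ is convex by A5, this maximizer is unique and equals the $\bar\Gamma\left(q^0\right)$-orthogonal projection of $Z$ onto $\Lambda$.

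It then remains to pass from convergence of the objectives to convergence of the argmax. Here I would invoke A3: since $\sqrt{M}\left(\hat Q^M - q^0\right) = O_p(1)$, the maximizer is confined, with probability arbitrarily close to one, to a fixed ball $\{\|\lambda\| \le B\}$, on which A4 lets me replace the moving feasible set $\sqrt{M}\left(\Theta - q^0\right)$ by the cone $\Lambda$ for all large $M$ (the local equality of $\Theta - q^0$ to $\Lambda$ on an open cube scales, under the $\sqrt{M}$ dilation, to equality of the two sets on $\{\|\lambda\| \le B\}$). A standard argmax continuous-mapping theorem for concave processes then yields
\begin{align*}
    \sqrt{M}\left(\hat Q^M - q^0\right) \xRightarrow{M \to \infty} \textup{argmax}\left\{Q(\lambda) : \lambda \in \Lambda\right\} = \hat\lambda,
\end{align*}
which is the claim.

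The hard part will be the argmax step: rigorously transferring weak convergence of the random functions $Q_M$ to weak convergence of their maximizers. This requires three ingredients working together — tightness of the maximizers (supplied by A3), a unique and well-separated limiting maximizer (supplied by convexity A5 and $\bar\Gamma\left(q^0\right) \succ 0$), and control of the moving constraint set, where A4 is essential to guarantee that the argmax is never \emph{lost} on the part of $\Theta$ far from $q^0$ that the cone $\Lambda$ does not capture. By contrast, the remaining steps — the quadratic expansion and the completion of the square identifying $\hat\lambda$ with the projection of $Z$ onto $\Lambda$ — are routine once A1 and A2 are in force.
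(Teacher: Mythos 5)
The paper never proves this theorem: it is imported verbatim as a known result of \citet{andrews1999estimation}, and the paper's actual work is the subsequent verification of A1--A5 for the Admixture Model (the proof of Theorem \ref{th:uv_closed}). Your sketch therefore reconstructs the proof from the cited source rather than anything in the paper, and it reconstructs it essentially correctly: the localization $\lambda = \sqrt{M}(q-q^0)$, the uniform quadratic expansion on balls $\{\|\lambda\|\le B\}$ via A1 (where the denominator $(1+\|\sqrt{M}(q-q^0)\|)^2$ is bounded by $(1+B)^2$, so A1 does give uniform negligibility of $R_M$ there), the joint convergence of score and Hessian via A2, the dilation argument showing $\sqrt{M}(\Theta - q^0)$ agrees with $\Lambda$ on any fixed ball for large $M$ (using that $\Lambda$ is a cone, so $\sqrt{M}\Lambda = \Lambda$), tightness from A3, and identification of the limit maximizer with the $\bar\Gamma(q^0)$-projection of $Z$ onto $\Lambda$, unique by A5 and $\bar\Gamma(q^0)\succ 0$. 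This is exactly the architecture of Andrews' argument.

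Two points deserve flagging. First, you silently repaired an inconsistency in the statement itself: A2 as written asserts that $\tfrac{1}{\sqrt{M}}\dot\Phi$ converges to $Z \sim \mathcal N\bigl(0, \bar\Gamma^{-1}(q^0)\bigr)$, whereas the score limit must be $W \sim \mathcal N\bigl(0, \bar\Gamma(q^0)\bigr)$ with $Z := \bar\Gamma^{-1}(q^0)W$ --- indeed the paper's own verification of A2 in the proof of Theorem \ref{th:uv_closed} establishes convergence to $\mathcal N\bigl(0,\bar\Gamma(q^0)\bigr)$, and the matrix $F$ in A2 is a dangling reference. Your bookkeeping is the correct one and matches the quadratic form defining $\hat\lambda$. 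Second, your appeal to an ``argmax continuous-mapping theorem for concave processes'' is not licensed by A1--A5 alone, since concavity of $Q_M$ is not assumed (it happens to hold in the admixture application, but the theorem is stated abstractly). This is harmless rather than fatal: with A3 supplying tightness and the limit quadratic having a unique, well-separated maximizer over the closed convex cone $\Lambda$, the standard argmax theorem applies without concavity; alternatively one can follow Andrews' own route, showing $\sqrt{M}\bigl(\hat Q^M - q^0\bigr) = \hat\lambda_M + o_p(1)$ with $\hat\lambda_M$ the $\hat\Gamma_M$-projection of $Z_M := \hat\Gamma_M^{-1}\tfrac{1}{\sqrt{M}}\dot\Phi$ onto $\Lambda$, and then using that projection onto a closed convex cone is continuous, so the ordinary continuous mapping theorem finishes. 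With that substitution your outline is a faithful and correct proof plan for the theorem the paper cites as a black box.
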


Altogether, we have defined everything to prove a CLT if the true parameter is on the boundary of the parameter space, i.e. to prove Theorem \ref{th:uv_closed}.

\begin{proof}[Proof of Theorem \ref{th:uv_closed}]
We prove that the constraints in Theorem \ref{th:andrews} hold. Therefore, without loss of generality, we assume $q^0_1 = 1.$
\begin{itemize}
    \item[A1] It holds with probability one
   \begin{align*}   
   |R_M(q|X_m)| &\leq \begin{cases}
       \left(q- q^0\right)^3 \left(\frac{2}{\langle q, p^0_{\cdot, m} \rangle^3} +\frac{2}{(1-\langle q, p^0_{\cdot, m} \rangle)^3}\right) , \langle q, p^0_{\cdot, m} \rangle \in (0,1) \\
       \left(q - q^0\right)^3  \frac{2}{\langle q, p^0_{\cdot, m} \rangle^3},   \langle q, p^0_{\cdot, m} \rangle = 1 \\
              \left(q - q^0\right)^3 \frac{2}{(1-\langle q, p^0_{\cdot, m} \rangle)^3},  \langle q, p^0_{\cdot, m} \rangle = 0.
   \end{cases}
    \end{align*}    
    Here, $(q - q^0)^3 := \sum_{k=1}^K \left(q_k-q_k^0\right)^3.$
    Hence, it holds 
    \begin{align*}
        &\lim_{M \to \infty} \sup_{q \in \Theta: |q-q^0| \leq \gamma_M} \frac{1}{M} \sum_{m=1}^M |R_M(q|X_m)| \\
        &\leq \lim_{M \to \infty} \sup_{q \in \Theta: |q-q^0| \leq \gamma_M} \frac{\left(q - q^0\right)^3}{M}\sum_{m=1}^M \frac{2}{\langle q, p^0_{\cdot, m} \rangle^3}1_{\langle q, p^0_{\cdot, m} \rangle^3 > 0} + \frac{2}{1 - \langle q, p^0_{\cdot, m} \rangle^3} 1_{\langle q, p^0_{\cdot, m} \rangle^3 <1}\\
        &\leq 2\lim_{M \to \infty} \frac{\gamma_M^3}{M} \sum_{m=1}^M c(\epsilon) \\
        &= 2 c(\epsilon) \lim_{M \to \infty} \gamma_M = 0.
    \end{align*}
 Here, we used the assumption that $\langle q^0, p_{\cdot,m}^0\rangle \in [\epsilon, 1-\epsilon]$ for all except finitely many markers.
\item[A2] 
    The claim is implied by the CLT and the law of large numbers. Specifically, we choose $B_M := \sqrt{M}$. Then, it holds (according to Theorem A6 in \cite{hoadley1971}, which holds since N5, N6, N7 and N8 are fulfilled)
    \begin{align*}
         \frac{1}{\sqrt{M}} \sum_{m = 1}^M  \dot{\Phi}(X_m,q^0) \xRightarrow[]{M \to \infty} \mathcal N(0, \bar{\Gamma}(q^0)).
    \end{align*}
 Moreover, it holds
       \begin{align*}
         \frac{1}{M} \sum_{m = 1}^M  \ddot{\Phi}(X_m,q^0) \xrightarrow[]{M \to \infty}_p \bar{\Gamma}(q^0).
    \end{align*} 
    The limit exists and is positive definite because of our assumption.
    The second claim can be proven by
    \begin{align*}
        \mathbb P \left(|\frac{1}{M} \sum_{m = 1}^M  \ddot{\Phi}(X_m,q^0) - \bar \Gamma(q^0)| \geq \epsilon \right) &\leq \frac{\mathbb V\left( \sum_{m = 1}^M  \ddot{\Phi}(X_m,q^0)\right)}{M^2 \epsilon} \\
        &= \frac{\sum_{m = 1}^M   \mathbb V\left(\ddot{\Phi}(X_m,q^0)\right)}{M^2 \epsilon} \\
        &\leq  \frac{\sum_{m = 1}^M   \mathbb E\left(\left(\ddot{\Phi}(X_m,q^0) \right)^2\right)}{M^2 \epsilon} \\
        &\leq 2 \frac{\sum_{m = 1}^M   \frac{1}{c_m^0} + \frac{1}{1-c_m^0}}{M^2 \epsilon}\\
        &\xrightarrow[]{M \to \infty} 0,
    \end{align*}
    because of the assumption that $c^0_m \in [0, \epsilon] \cap [1-\epsilon, 1]$ only holds for finitely many $m$ and because of $\mathbb E(X_{i,m}^2) = 2 c_{i,m}^0 \left(1-c_{i,m}^0\right).$
    \item[A3] The weak consistency of the MLE, (A1) and (A2) imply (A3) according to \cite{andrews1999estimation}.
    \item[A4] Obviously, the set $\Lambda$ is a cone. Hence, we just have to prove that there exists an open cube $C(0, \epsilon)$ with
    \begin{align*}
        \left(\Theta - q^0\right) \cap C(0, \epsilon) =  \Lambda \cap C(0, \epsilon).
    \end{align*}
    Here, we choose $\epsilon = 2$ to fulfill this constraint. Then, we have
        \begin{align*}
        &\left(\Theta - q^0\right) \cap C(0, 2) \\
        &= \left\{v \in \mathbb R^{K-1}: v_1 \in[-1,0], v_i \in [0, 1], |v_1+\ldots + v_{K-1}| \leq 1, i = 2, \ldots, K-1\right\} \\
        & \hspace{1cm} \cap (-1, 1) \times \ldots \times (-1, 1) \\
        &= \left\{v \in \mathbb R^{K-1}: v_1 \in(-1,0], v_i \in [0, 1), |v_1+\ldots + v_{K-1}| \leq 1, i = 2, \ldots, K-1\right\}\\
        &= \Lambda \cap C(0, 2).
    \end{align*}
    \item[A5] Let $\lambda_i \in \Lambda$ for all $i \in \{1, \ldots, K-1\}$ and $\alpha \in [0,1].$ Then, it holds $\sum_{i = 1}^{K-1} \alpha \lambda_i + (1-\alpha) \lambda_i = \alpha \sum_{i = 1}^{K-1}  \lambda_i + (1-\alpha) \sum_{i = 1}^K \lambda_i =0$ and 
    $\alpha \lambda_i + (1-\alpha) \lambda_i 
       \geq 0$ for $ i \geq 2$ and $\alpha \lambda_1 + (1-\alpha) \lambda_1 \leq 0.$
\end{itemize}
\end{proof}

\section{Discussion}
% Summary

In this study, we established limit results for the maximum likelihood estimators (MLEs) in the Admixture Model. Specifically, we analyzed the consistency of the MLEs in the unsupervised setting and proved central limit theorems (CLTs) for both cases—when the true parameter lies on the boundary of the parameter space and when it does not.

The CLTs presented in this work significantly extend previous approaches by \citet{pfaff2004information, pfaffelhuber2022central}, and they enable the quantification of uncertainty in MLEs under a wide range of conditions. To facilitate practical applications, we provide \texttt{Python} code on \href{https://github.com/CarolaHeinzel/Limit_Results_Admixture_Model}{GitHub}, covering all possible settings considered in our theory.

However, we emphasize that in the unsupervised setting, the MLEs are not unique. Therefore, a standard CLT for the MLE cannot be formulated in this case.

Our results can be applied in various ways. For example, they are useful in marker selection—an important topic in population genetics that has been widely studied \citep{phillips2014,kidd2014,xavier2022,xavier2020development, pfaffelhuber2020, resutik2023, phillips2019maplex, kosoy2009ancestry}, and are closely related to earlier work by \citet{pfaff2004information}. In addition, our theory can be used to assess the quality of a marker set in terms of estimation uncertainty.

We have also extended our framework to include the semi-supervised setting. This extension is particularly relevant for applications, as it allows for the estimation of uncertainty in the unsupervised case via known labels.

A natural next step is to consider scenarios where the random variables are not independently distributed. This situation frequently arises in practice, since genetic markers typically exhibit linkage disequilibrium. A suitable model for such dependencies is the Linkage Model, which incorporates a Hidden Markov Model to describe the correlation structure among markers, as introduced by \citet{falush2003}.

\bigskip 

\noindent {\bf {{Funding}}} \\
\\
\noindent
CSH is funded by the Deutsche Forschungsgemeinschaft (DFG, German Research Foundation) – Project-ID 499552394 – SFB Small Data. \\

\noindent {\bf {{Acknowledgment}}} \\
\\
CSH thanks Peter Pfaffelhuber for his continuous and strong support. \\

\noindent {\bf {{Declaration of Conflicts of Interest}}} \\
\\
The author declares that there are no conflicts of interest.

\begin{small}
    \bibliography{Bibliography}
\end{small}

\appendix

\end{document}